\numberwithin{equation}{section}
\newcommand{\be}{\begin{equation}}
\newcommand{\ee}{\end{equation}}
\newcommand{\ben}{\begin{equation}}
\newcommand{\een}{\end{equation}}
\newcommand{\barr}{\begin{eqnarray}}
\newcommand{\earr}{\end{eqnarray}}
\newcommand{\de}{\mathrm{d}}
\newcommand{\EE}{\mathcal{E}}
\newcommand{\R}{\mathbb{R}}
\newcommand{\C}{\mathbb{C}}
\newcommand{\Rs}{R_{\star}}
\newtheorem{assumptions}{Assumption}
\newenvironment{dimostrazione}{\removelastskip\par\medskip
\noindent{\em Proof of}
\emph}{\penalty-20\null\hfill$\square$\par\medbreak}
\spnewtheorem*{notation}{Notational remark}{\itshape}{\rmfamily}
\begin{document}

\title{Third-order phase transition: random matrices  and screened Coulomb gas with hard walls}


\author{Fabio Deelan Cunden \and
Paolo Facchi \and\\
 Marilena Ligab\`o \and
  Pierpaolo Vivo   
}


\institute{Fabio Deelan Cunden \at
              School of Mathematics and Statistics, University College Dublin, Dublin 4, Ireland \\
              \email{fabio.cunden@ucd.ie}           
           \and
           Paolo Facchi  \at
              Dipartimento di Fisica and MECENAS, Universit\`a di Bari, I-70126 Bari, Italy\\
              Istituto Nazionale di Fisica Nucleare (INFN), Sezione di Bari, I-70126 Bari, Italy
              \and
              Marilena Ligab\`o \at
              Dipartimento di Matematica, Universit\`a di Bari, I-70125 Bari, Italy
              \and
                Pierpaolo Vivo   \at
               King's College London, Department of Mathematics, Strand, London WC2R 2LS, United Kingdom
}

\date{Received: date / Accepted: date}

\maketitle

\begin{abstract}

Consider the free energy of a $d$-dimensional gas in canonical equilibrium under pairwise repulsive interaction and global confinement, in presence of a volume constraint. When the volume of the gas is forced away from its typical value,  the system undergoes a phase transition of the third order separating two phases (pulled and pushed).
We prove this result  i) for the eigenvalues of one-cut, off-critical random matrices (log-gas in dimension $d=1$) with hard walls; ii) in arbitrary dimension $d\geq1$ for a gas with Yukawa interaction (aka screened Coulomb gas) in a generic confining potential. The latter class includes systems with Coulomb (long range) and delta (zero range) repulsion as limiting cases. 
In both cases, we obtain an exact formula for the free energy of the constrained gas which explicitly exhibits a jump in the third derivative, and we identify the `electrostatic pressure' as the order parameter of the transition. 
Part of these results were announced in [F. D. Cunden, P. Facchi, M. Ligab\`o and P. Vivo, J. Phys. A: Math. Theor. {\bf 51}, 35LT01 (2018)].

\end{abstract}


\section{Introduction and statement of results}
\label{sec:intro}

Phase transitions -- points in the parameter space which are singularities in the free energy --  generically occur in the study of ensembles of random matrices, as the parameters in the joint probability distribution of the eigenvalues are varied~\cite{Cicuta01}. The aim of this paper is to characterise the \emph{pulled-to-pushed phase transition} (defined later) in random matrices with hard walls and, more generally, in systems with repulsive interaction in arbitrary dimensions.
\par
Given an interaction kernel $\Phi\colon\R^d\to(-\infty,+\infty]$ and a potential $V\colon\R^d\to\R$, we define the energy associated to a gas of $N$ particles at position $x_i\in\R^d$ ($i=1,\dots,N)$ as
\begin{equation}
E_N(x_1,\ldots,x_N)=\frac{1}{2}\sum_{i\neq j}\Phi(x_i-x_j)+N\sum_k V(x_k),\quad x_i\in\R^d\ . \label{energy1}
\end{equation}
We regard $-\nabla \Phi(x-y)$ as the force that a particle at $x$ exerts on a particle at $y$, and $V(x)$ as a global coercive potential energy, $V(x)\to+\infty$ as $|x|\to\infty$. The typical interactions we have in mind are repulsive at all distances, i.e. $-\nabla \Phi(x)\cdot x\geq0$. It is natural to assume that the repulsion is isotropic $ \Phi(x)=\varphi(|x|)$ and that the potential is radial $V(x)=v(|x|)$.  
\par
The normalisation of the energy is done in such a way that both terms (the sum over pairs and the sum of one-body terms) are of same order $\operatorname{O}(N^2)$ for large $N$. Indeed, in terms of the `granular' normalised particle density,
\begin{equation}
\label{eq:empmeas}
\rho_N = \frac{1}{N} \sum_i \delta_{x_i}\ ,
\end{equation}
the energy~\eqref{energy1} reads
\begin{equation}
E_N(x_1,\ldots,x_N) = N^2 \left[ \frac{1}{2}\iint_{x\neq y}\Phi(x-y)\de\rho_N(x)\de\rho_N(y)+\int  V(x)\de\rho_N(x) \right] \, .
\end{equation}
\par
The minimisers $\rho_N$ of the discrete energy should achieve the most stable balance between the repulsive effect of the interaction term and the global confinement. Finding global and constrained minimisers of the discrete energy $E_N$ is a question of major interest in the theory of optimal point configurations. 
\par
For a large class of interaction kernels, the sequence of minimisers $\rho_N$ converges toward some non-discrete  measure $\rho$ when $N\to\infty$.
It is therefore convenient to reframe  the optimisation problem in terms of a field functional $\EE\colon\mathcal{P}(\R^d)\to(-\infty,+\infty]$ defined on the set of probability measures $\rho\in\mathcal{P}(\R^d)$ by
\be
\EE[\rho]=\frac{1}{2}\iint_{\R^d\times\R^d}\Phi(x-y)\de\rho(x)\de\rho(y)+\int_{\R^d} V(x)\de\rho(x)\ ,
\label{eq:EF}
\ee
where $\rho(x)$ represents the normalised density of particles around the position $x\in\R^d$. 
\par
Mean field energy functionals of the form~\eqref{eq:EF} and their minimisers have received attention in the study of asymptotics of the partition functions of interacting particle systems.  
Consider the positional partition function of a particle system defined by the energy~\eqref{energy1} at inverse temperature $\beta>0$
\be
Z_N=\int e^{-\beta E_N}\de x_1\cdots \de x_N\ .
\ee
For several particle systems including eigenvalues of random matrices, Coulomb and Riesz gases, the leading term in the asymptotics of the free energy is the minimum of the mean field energy functional~\cite{Chafai14,Leble15,Rougerie16}
\be
-\frac{1}{\beta N^2}\log Z_N \to \min_{\rho\in\mathcal{P}(\R^d)}\EE[\rho]\ , \qquad \text{as}\quad N\to\infty\ .
\label{eq:asympt_Z}
\ee
\par
\begin{remark}
We stress that physically the change of picture $\{x_i\}\to\rho(x)$ is accompanied by an entropic contribution of the gas (the `number' of microstates $\{x_i\}$ of the gas that contribute to a given macroscopic density profile $\rho(x)$). However, the scaling in $N$ of the energy is such that the entropic contribution is always sub-leading in the large $N$ limit, and the free energy is dominated by the internal energy component. 
 To see this, we remark that  the Boltzmann factor corresponding to the energy~\eqref{energy1} can be written as
  \begin{equation}
  e^{-\beta E_N( x_1,\dots,x_N)}=\exp\left\{-\beta N\left(\frac{1}{2N}\sum_{i\neq j}\Phi(x_i-x_j)+\sum_k V(x_k)\right)\right\}\ .
  \label{eq:MF}
  \end{equation}
Therefore, the mean-field limit $N\to\infty$ is simultaneously a limit of large number of particles \emph{and} a zero-temperature limit, with an energy $\operatorname{O}(N)$ extensive in the number of particles as in standard statistical mechanics. (We learned this argument from a paper by Kiessling and Spohn~\cite{Spohn99}.) One then expects that in the large-$N$ limit, the free energy of $N$ particles at  inverse mean-field temperature $\beta_{MF}=\beta N$ approaches the minimum energy since at zero temperature the entropy is absent. This also explains the rescaling in~\eqref{eq:asympt_Z}, as $\beta_{MF}N=\beta N^2$. 
\end{remark}
Having clarified the physical meaning of the mean-field functional~\eqref{eq:EF}, we now turn to the problem addressed in this paper.

\begin{notation}
Throughout the paper $\mathcal{P}(B)$ denotes the set of probability measures whose support lies  in $B\subset\R^d$.  The   Euclidean ball of radius $R$ centred at $0$ is denoted by  $B_R=\{x\in \R^d\colon |x|=(x_1^2+\cdots +x_d^2)^{1/2}\leq R\}$. If $F$ is a formula, then $\mathbbm{1}_{F}$ is the indicator of the set defined by the formula $F$. 
We also use the notation $a\wedge b=\min\{a,b\}$.
\end{notation}
\subsection{Formulation of the problem}

Consider a particle systems with Boltzmann factor as in~\eqref{eq:MF} and assume that,  for large $N$, the partition function behaves like~\eqref{eq:asympt_Z}.
Under general hypotheses, the balance between mutual repulsion and external confinement allows for the existence of a \emph{compactly supported} global minimiser of the energy functional $\EE$ in~\eqref{eq:EF}. In radially symmetric systems ($\Phi(x)=\varphi(|x|)$ and $V(x)=v(|x|)$), the minimiser $\rho_{R_{\star}}$ is supported on a ball 
\ben
\EE[\rho_{R_{\star}}]=\inf_{\rho\in\mathcal{P}(\R^d)}\EE[\rho],\qquad \int_{B_{R_{\star}}}\de\rho_{R_{\star}}(x)=1\ .
\een
We say that the gas is at equilibrium in a ball of radius $R_{\star}$ with density $\rho_{\Rs}$.
\par
Suppose that we want to compute the probability that the gas is contained in a  volume $B_R$ with $R\neq R_{\star}$, i.e. 
\ben
\operatorname{Pr}\left(x_i\in B_R,\, i=1,\dots,N\right)=\frac{\int\limits_{x_i\in B_R}e^{-\beta E_N}\de x_1\cdots \de x_N}{\int\limits_{x_i\in \R^d}e^{-\beta E_N}\de x_1\cdots \de x_N}=\frac{Z_N(R)}{Z_N(\infty)}\ .
\label{eq:prob_conf}
\een
The denominator is nothing but the partition function of the gas $Z_N(\infty)=Z_N$, while the integral in the numerator is the partition function $Z_N(R)$ of the same gas constrained to stay in the ball $B_R$. 
In view of the asymptotics~\eqref{eq:asympt_Z}, for large $N$
\ben
\log Z_N(R)=-\beta N^2\EE[\rho_R]+  o(N^2) \ ,
\een
where $\rho_R$ is the equilibrium measure of the gas confined in $B_R$
\ben
\EE[\rho_R]=\inf_{\rho\in\mathcal{P}(B_R)}\EE[\rho]\ .
\label{eq:constmin}
\een
(Note that $\rho_R = \rho_{R_{\star}}$ for all $R\ge R_{\star}$.) We conclude that  the probability~\eqref{eq:prob_conf} decays as
\ben
\operatorname{Pr}\left(x_i\in B_R,\, i=1,\dots,N\right)\approx e^{-\beta N^2 F(R)}\ ,
\een
where the \emph{large deviation function} $F(R)$ is 
\begin{align}
F(R)
=- \lim_{N\to\infty}\frac{1}{\beta N^2}\left(\log Z_N(R)-\log Z_N(\infty)\right)=\EE[\rho_R]-\EE[\rho_{R_\star}]\ .
\label{eq:def_F}
\end{align}
Clearly, $F(R)\geq0$ and is non-increasing. 
The physical interpretation of $F(R)$ and $\rho_R$ is clear: $F(R)$ is the \emph{excess free energy} of the gas, constrained within $B_R$, with respect to the situation where it occupies the unperturbed volume $B_{R_\star}$; the measure $\rho_R$ describes the equilibrium density of the constrained gas in the limit of large number of particles~$N$.
\par
The general picture is as follows (see Fig.~\ref{fig:scheme}): 
\begin{itemize}
\item[i)] In the unconstrained problem ($B_R=\R^d$) the global minimiser $\rho_{R_{\star}}$ is supported on the ball $B_{R_{\star}}$;
\item[ii)] If $R>R_{\star}$, the constraint in~\eqref{eq:def_F} is immaterial ($B_R$ contains $B_{R_\star}$), and hence the equilibrium measure is $\rho_R=\rho_{R_{\star}}$ and $F(R)=0$. This is the so-called \emph{pulled phase}, borrowing a terminology suggested in~\cite{Nadal11}; 
\item[iii)] If $R<R_{\star}$ the system is in a \emph{pushed phase}, the constraint is effective, and the equilibrium energy of the system increases $\EE[\rho_R]\geq\EE[\rho_{R_\star}]$. 
\item[iv)] At $R=R_{\star}$ the gas undergoes  a \emph{phase transition} and the free energy $F(R)$ displays a non-analytic behaviour. Typically at microscopic scales one expects a crossover function separating the pushed and pulled phases.
\end{itemize}
\begin{figure}
\centering
  \includegraphics[width=.325\columnwidth]{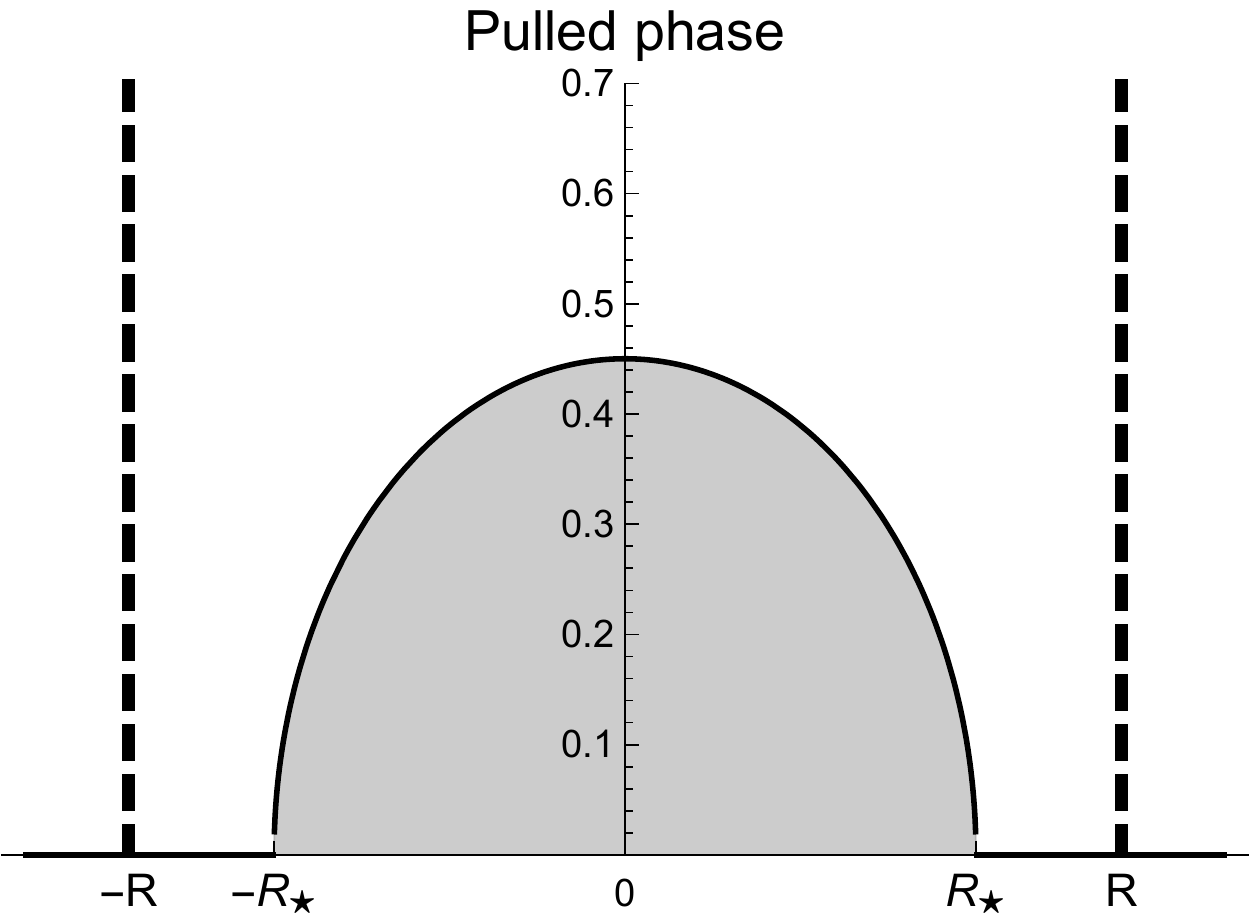}
    \includegraphics[width=.325\columnwidth]{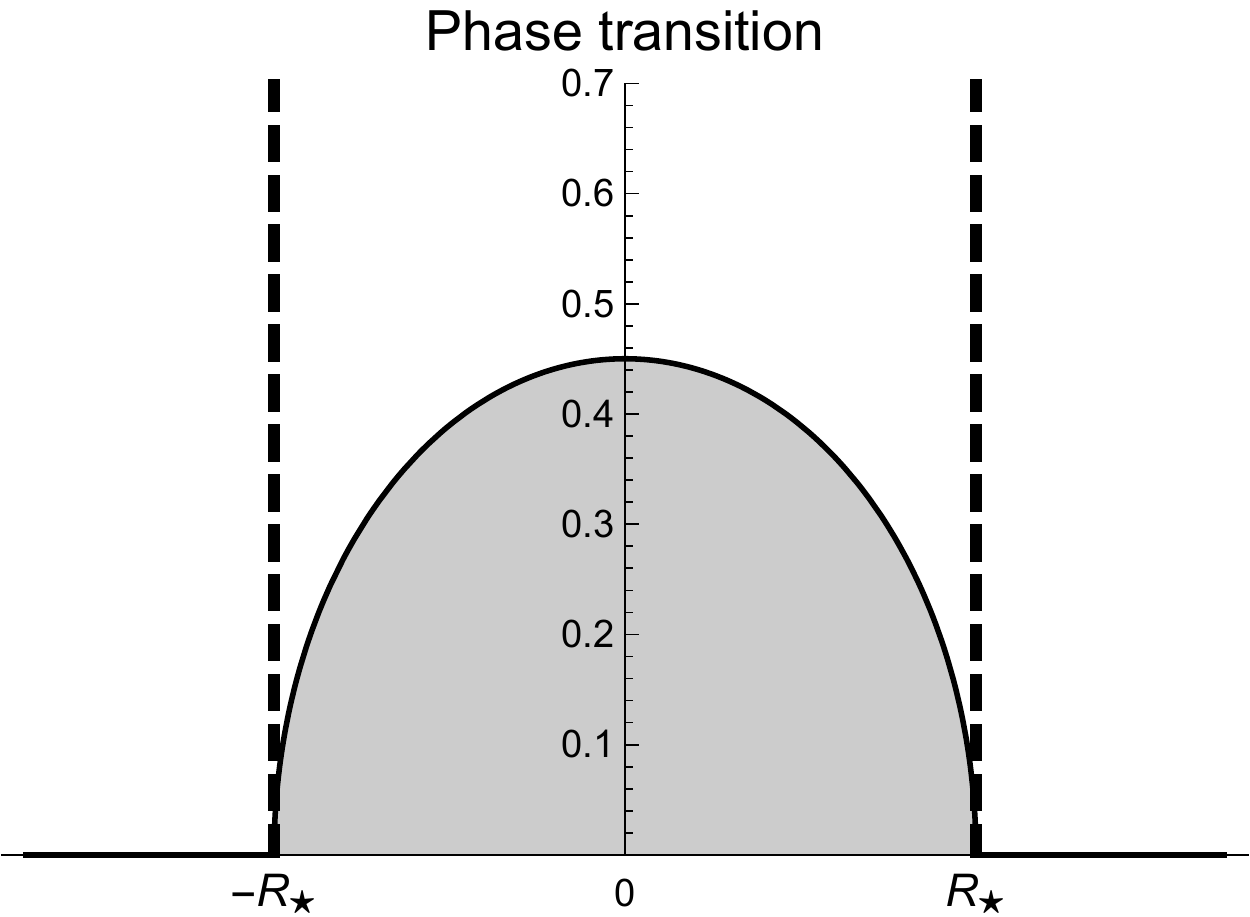}
      \includegraphics[width=.325\columnwidth]{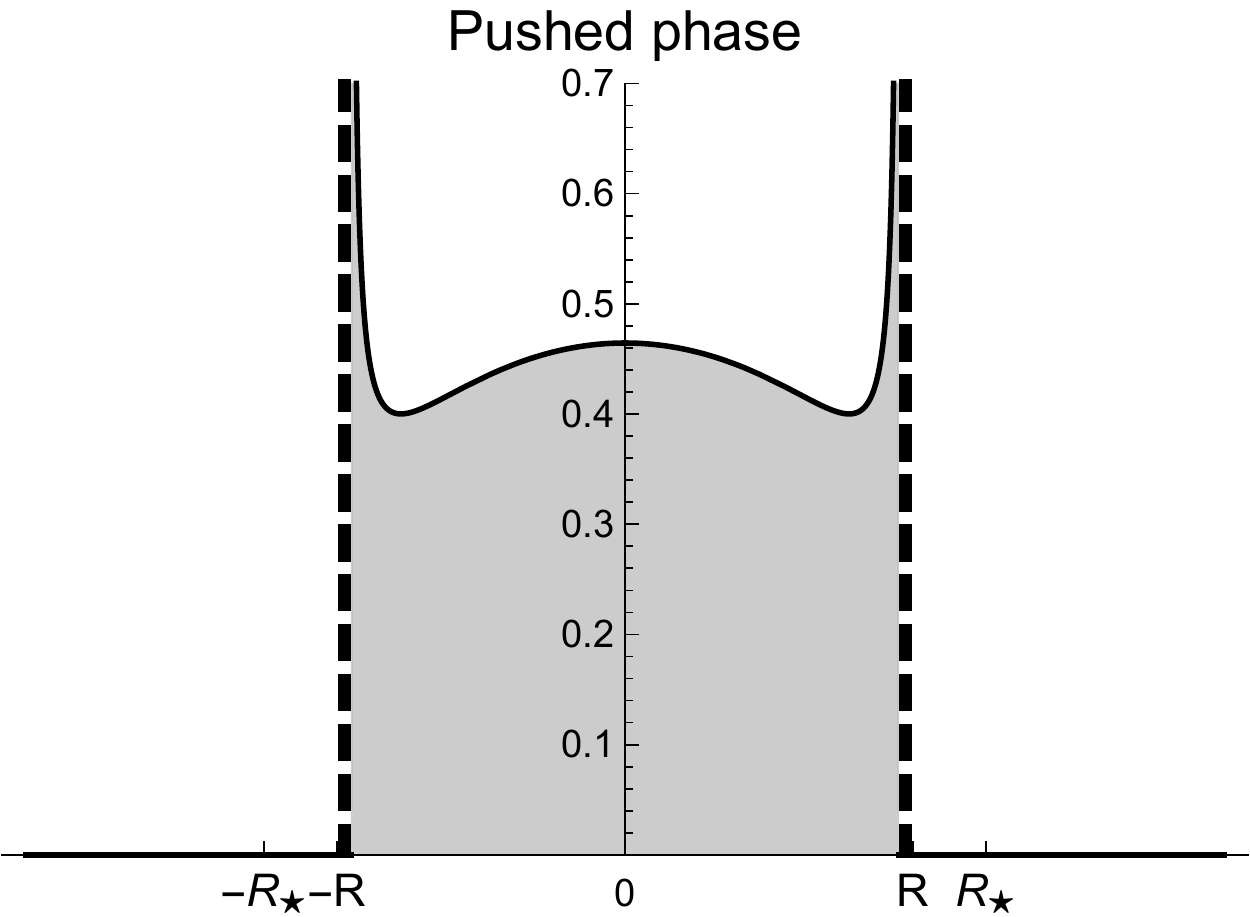}
\caption{ The pulled-to-pushed transition for a log-gas in dimension $d=1$ in a quadratic potential (GUE).}
\label{fig:scheme}       
\end{figure}
\par
The goal of this work is to investigate the properties of the excess free energy
\be
F(R)=\inf_{\rho\in\mathcal{P}(B_R)}\EE[\rho]-\EE[\rho_{R_\star}]
\label{eq:def_F_gen}
\ee
at the critical point $R_\star$ for certain systems with pairwise repulsive interactions. Explicitly solvable models related to random matrices suggest that in the vicinity of the critical point
\be
F(R)\simeq(R_{\star}-R)^3\mathbbm{1}_{R\leq R_{\star}}\ ,
\label{eq:F_vicinity}
\ee
implying that the transition between the \emph{pushed} and \emph{pulled} phases of the gas is third-order. Here we demonstrate that~\eqref{eq:F_vicinity} is generically true for a large class of systems with repulsive interactions. More precisely, we prove~\eqref{eq:F_vicinity}:
\begin{itemize}
\item[i)] For the log-gas $\Phi(x)=-\log|x|$ in  dimension $d=1$ (eigenvalues of one-cut, off-critical random matrices with hard walls);
\item[ii)] For the Yukawa gas $\Phi(x)=\Phi_d(x)$ with
\be
\Phi_d(x)=\frac{1}{a^2 2^{\frac{d}{2}-1}}\frac{1}{\Gamma\left(\frac{d}{2}\right)}\left(\frac{m}{a|x|}\right)^{\frac{d}{2}-1}K_{\frac{d}{2}-1}\left(\frac{m|x|}{a}\right)
\label{eq:Yukawa_kernel}
\ee
in arbitrary dimension $d\geq1$, including its limiting cases $m\to0$ (Coulomb gas) and  $a\to0$ (Thomas-Fermi gas). In~\eqref{eq:Yukawa_kernel}, $K_\nu$ denotes the modified Bessel function of the second kind.
 \end{itemize}
 \par
 The precise assumptions on the confining potential $V(x)$ are presented together with the statements of Theorem~\ref{thm:log-gas2} and \ref{thm:Yukawa} below. 

\subsection{Extreme eigenvalues of random matrices and log-gases with hard walls}
\subsubsection{Hermitian random matrices}
Many random matrix theory (RMT) phenomena have been first discovered for invariant measures on the space of $N\times N$ complex Hermitian matrices $M$ of the form
\be
\de P(M)=\dfrac{\displaystyle e^{-\beta N\operatorname{Tr}V(M)}\de M}{\displaystyle\int e^{-\beta N\operatorname{Tr}V(M')}\de M'}\ ,
\label{eq:Pcanonical}
\ee
where $V$ is a scalar function referred to as the potential of the matrix model and $\beta=2$. 

Expectations of conjugation-invariant random variables with respect to these measures can be reduced, via the Weyl denominator formula, to an integration against the joint density of the eigenvalues $x_1,\dots,x_N$, which has the form 
\begin{align*}
p_N(x_1,\dots,x_N)&=\frac{1}{Z_N}e^{-\beta E_N(x_1,\dots,x_N)}\nonumber\\
E_N(x_1,\dots,x_N)&=-\frac{1}{2}\sum_{i\neq j}\log|x_i-x_j|+N\sum_k V(x_k),\quad x_i\in\R\ .
\label{eq:Pcanonical2}
\end{align*}
This energy is of the form~\eqref{energy1} in dimension $d=1$ and with $\Phi(x)=-\log|x|$. Hence, $Z_N=\int_{\R}e^{-\beta E_N}\de x_1\cdots \de x_N$ can be interpreted as the  partition function of a \emph{log-gas} (or a $2d$ Coulomb gas) of $N$ repelling particles on the line in a global confining potential $V$. This physical interpretation also suggests that we may consider generic values of the \emph{Dyson index} $\beta>0$, interpreted as inverse temperature. 

In the large-$N$ limit the eigenvalue empirical measure $\rho_N=\frac{1}{N}\sum_i\delta_{x_i}$  weakly converges to a deterministic density. This limit is the \emph{equilibrium measure} (the minimiser) of the functional 
\be
\EE[\rho]=-\frac{1}{2}\iint_{\R\times\R}\log|x-y|\de\rho(x)\de\rho(y)+\int_{\R} V(x)\de\rho(x)\ . 
\label{eq:EF_log}
\ee

For concreteness, let us focus on the Gaussian Unitary Ensemble (GUE) defined by the measure~\eqref{eq:Pcanonical} with $V(x)=x^2/2$ and $\beta=2$. 
In this case, the equilibrium measure is supported on the symmetric interval $[-R_{\star},R_{\star}]$ where $R_{\star}=\sqrt{2}$, with a  density named after Wigner (semicircular law)
\ben
\de \rho_N\to\frac{1}{\pi}\sqrt{2-x^2}\mathbbm{1}_{|x|\leq R_{\star}} \de x\ .
\een
Moreover, as $N\to\infty$, the extreme statistics $\max|x_i|$ converges to the edge $R_{\star}$, namely $\operatorname{Pr}\left(\max|x_i|\leq R\right)$ converges to a step function: $0$ if $R<R_{\star}$, and $1$ if $R>R_{\star}$. For large $N$, the fluctuations of the spectral radius $\max|x_i|$ around $R_{\star}$ at the typical scale $\operatorname{O}(N^{-2/3})$ are described by a \emph{squared Tracy-Widom distribution}. In formulae~\cite{Dean16,Edelman15},
\ben
\lim_{N\to\infty}\operatorname{Pr}\left(\max|x_i|\leq R_{\star}+\frac{t}{\sqrt{2}N^{2/3}}\right)=\mathcal{F}_2^2(t)\ ,
\een
where $\mathcal{F}_{\beta}(t)$ is known as the $\beta$-Tracy-Widom distribution~\cite{Tracy94} and can be expressed in terms of the Hastings-McLeod solution of the Painlev\'e II equation. The macroscopic (atypical) fluctuations of $\max|x_i|$ are instead described by a large deviation function. More precisely, for all $\beta>0$ the following limit exists
\ben
- \lim_{N\to\infty}\frac{1}{\beta N^2}\log\operatorname{Pr}\left(\max|x_i|\leq R\right)=-\lim_{N\to\infty} \frac{1}{\beta N^2}\log\frac{Z_N(R)}{Z_N(\infty)}=F(R)\ .
\een

For the log-gas in $d=1$, the asymptotics of the partition function 
\ben
\log Z_N(R)=-\beta N^2\EE[\rho_R]+o(N^2),\quad\text{with}\quad \EE[\rho_R]=\inf_{\rho\in\mathcal{P}(B_R)}\EE[\rho]\ ,
\een
has been rigorously established in several works.
When $R\geq R_{\star}$, the equilibrium density is supported on the ball of finite radius $R_{\star}$, so that $\rho_{R}=\rho_{R_{\star}}$  (the volume constraint is ineffective).  
Hence, the  large deviation function  is the excess free energy~\eqref{eq:def_F_gen} of the gas of eigenvalues forced to stay between two hard walls at $\pm R$.
It is clear that $F(R)=0$ for $R\geq R_{\star}$ (pulled phase), while $F(R)\geq0$ for $R<R_{\star}$ (pushed phase) when the log-gas gets pushed by the hard walls at $\pm R$.
\par
The calculation of $F(R)$ for the GUE and its $\beta>0$ extensions was performed in detail by Dean and Majumdar~\cite{Dean06,Dean08} who found explicit expressions for the  density
\be
\rho_R(x)=
\begin{cases}
\displaystyle\frac{1}{\pi}\frac{2+R^2-2x^2}{2\sqrt{R^2-x^2}}\mathbbm{1}_{|x|< R}  &  \text{if $R< R_{\star}$ (pushed phase)}\smallskip\\
\displaystyle\frac{1}{\pi}\sqrt{2-x^2}\mathbbm{1}_{|x|\leq R_{\star}} & \text{if $R\geq R_{\star}$ (pulled phase)}\ ,
\end{cases}
\label{eq:density_DM}
\ee 
and for the excess free energy 
\begin{equation}
F_{\text{GUE}}(R)=
\begin{cases}
\displaystyle\frac{1}{32}\left(8R^2-R^4-16\log R-12+8\log2\right)& \text{if $R< R_{\star}$}\smallskip \\
0  & \text{if $R\geq R_{\star}$}\ .
\end{cases}
\label{eq:rateFGUE}
\end{equation}
A closer inspection of the latter formula provides a thermodynamical characterisation of the pulled-to-pushed transition. Indeed, we see that 
\begin{equation}
F_{\text{GUE}}(R)\sim \frac{\sqrt{2}}{3}(R_{\star}-R)^3\mathbbm{1}_{R\leq R_{\star}}\, ,
\end{equation}
as $R\to R_{\star}$. Therefore, the third derivative of the free energy of the log-gas at the critical point $R_{\star}=\sqrt{2}$ is discontinuous. 

Similar phase transitions of the pulled-to-pushed type have been observed in several physics models related to random matrices~\cite{Majumdar14,Cunden16s}, including large-$N$ gauge theories~\cite{Gross80,Wadia80,Santilli18,Barranco14}, longest increasing subsequences of random permutations~\cite{Johansson98}, quantum transport fluctuations in mesoscopic conductors~\cite{VMB08,VMB10,Cunden15,Grabsch15,Grabsch16}, non-intersecting Brownian motions~\cite{SMCF,FMS11}, entanglement measures in a bipartite system~\cite{Facchi08,Facchi10,NMV10,Facchi13}, random tilings~\cite{Colomo13,Colomo15},  random landscapes~\cite{Fyodorov12}, and the tail analysis in the KPZ problem~\cite{Krajenbrink19}. (See also the recent popular science articles~\cite{Buchanan14,Wolchover14}.) 

An explanation of the critical exponent `3' has been put forward by Majumdar and Schehr~\cite{Majumdar14} (see also~\cite{Atkin14}) based on a standard extreme value statistics criterion and a matching argument of the large deviation function behaviour in the vicinity of the critical value $R_{\star}$ and the left tail of the Tracy-Widom distribution~\cite{Ramirez06}
\begin{equation}
\mathcal{F}_{\beta}'(x)\approx \exp\left(-\frac{\beta}{24}|x|^3\right), \quad x\to-\infty\ .
\end{equation}
The criterion predicts that if the equilibrium density of a log-gas in the pulled phase vanishes as $\rho_{R_{\star}}(x)\sim\sqrt{R_{\star}^2-x^2}$ at the edges --  the so-called \emph{off-critical} case -- then the pulled-to-pushed phase transition is of the third order. This conjectural relation between the particular behaviour of the gas density and the arising non-analyticities in the free energies
has been verified in several examples, even though each particular case (i.e. each matrix ensemble defined by a potential $V$) requires working out explicitly the model-dependent $F(R)$ to compute the critical exponent.
\par
In this paper \emph{we derive a general explicit formula for the free energy $F(R)$ of a log-gas in dimension $d=1$ in presence of hard walls, and we prove the universality of the third-order phase transition for one-cut, off-critical matrix models}. This proves the prediction arising from the extreme value statistics criterion formulated in~\cite{Majumdar14}.
\par 
While here we examine problems with radial symmetry in both the potential and the hard walls as they constitute a paradigmatic framework and allow for a systematic treatment,  
the electrostatic interpretation we present in Section~\ref{subsec:log} below enjoys a wider range of application. In Remark~\ref{rmk:nonsym}, indeed, we will show how the formulae and conclusions concerning the symmetric case carry over without extra efforts to non-symmetric potentials or random matrices with a single hard-wall as well. 
\par
The constrained minimisation problem for the log-gas in $d=1$ is usually solved by using complex-analytic methods or Tricomi's formula, which are well-known to those working in potential theory and random matrices. Nevertheless, we found a particularly convenient (and perhaps not so well-known) method based on the decomposition into Chebyshev polynomials which is well-suited to this class of problems. At the heart of the method is the following pointwise \emph{multipole expansion} of the two-dimensional Coulomb interaction (proven in Appendix \ref{app:lemmas})
\be
-\log |x-y|=\log 2+\sum_{n\geq 1}\frac{2}{n}T_n(x)T_n(y)\qquad x,y\in[-1,1],\;  x\neq y\ ,
\label{eq:formula:Cheb}
\ee
where the $T_n$'s are the Chebyshev polynomials of the first kind. They are defined by the orthogonality relation
\be
\int_{-1}^1 \frac{T_n(x)T_m(x)}{\sqrt{1-x^2}}\de x=\delta_{nm}h_n\quad\text{with}\quad h_n =
\begin{cases}
\pi &\text{if $n=m=0$}\\
\pi/2 &\text{if $n=m\geq 1$}
\end{cases},
\label{eq:Cheby_orth}
\ee
and they form a complete basis of $L^2([-1,1])$ (with respect to the arcsine measure).
The above identity was recently used and discussed in~\cite{FKS16,Garoufalidis03} (the authors refer to some unpublished lecture notes by U. Haagerup).

We consider  potentials $V(x)$ satisfying the following assumptions. 
 \begin{assumptions}
\label{ass:potential_log-gas}
$V(x)$ is $C^{3}(\R)$,  symmetric $V(x)=V(-x)$, strictly convex and satisfies $\liminf_{|x|\to\infty}\frac{V(x)}{\log|x|}>1$.
\end{assumptions}
We remark that strictly convex and super-logarithmic $V(x)$'s are in the class of \emph{one-cut, off-critical} potentials. 
 
We are going to present and discuss two theorems for the log-gas that will be proven in Sec.~\ref{sec:log-gas}.

\begin{theorem}
\label{thm:EM_log-gas} 
In dimension $d=1$, let $\Phi(x)=-\log|x|$, and $V(x)$ be a potential satisfying Assumption~\ref{ass:potential_log-gas}. Then 
\begin{itemize}
\item[i)]
there exists a unique probability measure that is solution of the constrained minimisation problem~\eqref{eq:constmin} for the energy functional~\eqref{eq:EF}, and it takes the form
\begin{equation}
\de\rho_R(x)=
\begin{cases}
\displaystyle\frac{1}{\pi}\frac{P_R(x)}{\sqrt{R^2-x^2}}\mathbbm{1}_{|x|< R}\,\de x  &  \text{if $R< R_{\star}$ (pushed phase)}\smallskip\\
\displaystyle\frac{1}{\pi}Q(x)\sqrt{R_{\star}^2-x^2}\mathbbm{1}_{|x|\leq R_{\star}}\,\de x  & \text{if $R\geq R_{\star}$ (pulled phase)}\ ,
\end{cases}
\label{eq:sol_log-gas}
\end{equation}
where  $P_R(x)$ and $Q(x)= \lim_{R\uparrow \Rs}P_{R}(x)/(R^2-x^2)$ are nonnegative on the support $[-R,R]$ and $[-\Rs,\Rs]$, respectively.

\item[ii)] An explicit expression of $P_R(x)$ (and $Q(x)$) is as follows.
Denote by $c_n(R)$ the Chebyshev coefficients of $V(R x)$. i.e.,
\ben
c_n(R)=\frac{1}{h_n}\int_{-1}^1  \frac{V(R x)T_n(x)}{\sqrt{1-x^2}}\de x\ .
\label{eq:Cheb_coeff}
\een
Then, the equilibrium measure~\eqref{eq:sol_log-gas} is uniquely determined as
\be
P_R(x)=1-\sum_{n\geq1}nc_n(R)T_{n}(x/R)\ .
\label{eq:sol_log-gas2}
\ee
The critical radius $\Rs$ is the smallest positive solution of the equation
\be
\sum_{n\geq1}nc_{n}(\Rs)=1\ .
\label{eq:radius_log}
\ee
\end{itemize}
\end{theorem}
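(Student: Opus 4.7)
The plan is to first establish existence and uniqueness of the constrained minimiser, and then derive the explicit form~\eqref{eq:sol_log-gas2} by inserting the multipole expansion~\eqref{eq:formula:Cheb} into the Euler--Lagrange equations and matching Chebyshev coefficients.

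\textbf{Existence, uniqueness, and variational characterisation.} The functional $\EE$ is lower semi-continuous and strictly convex on $\mathcal{P}(B_R)$, the strict convexity coming from the classical positive-definiteness of the logarithmic energy on signed measures of zero mass. Since $B_R$ is compact, $\mathcal{P}(B_R)$ is compact in the narrow topology and a unique minimiser $\rho_R$ exists. By standard potential-theoretic arguments, $\rho_R$ is characterised by the existence of a constant $\ell_R\in\R$ such that
\begin{equation*}
-\int\log|x-y|\,\de\rho_R(y)+V(x)=\ell_R \quad\text{on }\operatorname{supp}(\rho_R),
\end{equation*}
and $\geq\ell_R$ on $B_R\setminus\operatorname{supp}(\rho_R)$. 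In the pushed regime the expected support is all of $[-R,R]$, so the equality holds throughout.

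\textbf{Chebyshev matching.} Rescale by $x=Rs$, $y=Rt$, and denote by $\tilde\rho_R$ the pushforward of $\rho_R$ under $x\mapsto x/R$. Substituting~\eqref{eq:formula:Cheb} into the Euler--Lagrange equation yields
\begin{equation*}
-\log R+\log 2+\sum_{n\geq1}\frac{2}{n}\,T_n(s)\int_{-1}^{1}T_n(t)\,\de\tilde\rho_R(t)+V(Rs)=\ell_R, \quad s\in[-1,1].
\end{equation*}
Expand $V(Rs)=\sum_{n\geq0}c_n(R)T_n(s)$ and make the hard-edge ansatz $\de\tilde\rho_R(s)=(1/\pi)P(s)(1-s^2)^{-1/2}\de s$ with $P(s)=\sum_{k\geq0} a_k T_k(s)$. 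The orthogonality~\eqref{eq:Cheby_orth} gives $\int T_n\,\de\tilde\rho_R=a_0$ for $n=0$ and $a_n/2$ for $n\geq1$. Matching the coefficient of each $T_n(s)$ with $n\geq 1$ forces $a_n=-nc_n(R)$, while normalisation $\int\de\rho_R=1$ fixes $a_0=1$; undoing the rescaling yields~\eqref{eq:sol_log-gas2}.

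\textbf{Critical radius and pulled phase.} The candidate density is admissible only if $P_R\geq0$ on $[-R,R]$. Since $V$ is even, the coefficients $c_n(R)$ vanish for odd $n$, and the endpoint value reduces to $P_R(\pm R)=1-\sum_{n\geq1}nc_n(R)$. One shows, using strict convexity of $V$, that $R\mapsto\sum_{n\geq1}nc_n(R)$ is strictly increasing and ranges continuously from $0$ to $+\infty$, so that~\eqref{eq:radius_log} admits a unique positive root $\Rs$. At $R=\Rs$ the endpoint zero combines with the inverse square-root singularity to produce a soft edge; symmetry enforces the factorisation $P_{\Rs}(x)=(\Rs^2-x^2)Q(x)$ with $Q\geq 0$, recovering~\eqref{eq:sol_log-gas} at criticality. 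For $R\geq\Rs$, the measure $\rho_{\Rs}$ already belongs to $\mathcal{P}(B_R)$ and is a fortiori a minimiser there, so $\rho_R=\rho_{\Rs}$ by uniqueness.

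\textbf{Main obstacle.} The delicate technical step is the joint positivity and monotonicity statement for $P_R$. The useful tool is the identity $\de/\de x[\sqrt{1-x^2}\,U_{n-1}(x)]=-nT_n(x)/\sqrt{1-x^2}$, which via integration by parts rewrites $nc_n(R)=(2R/\pi)\int_{-1}^{1}V'(Rt)\sqrt{1-t^2}\,U_{n-1}(t)\,\de t$. Inserting this into $1-P_R(Rs)$ and summing by means of the standard closed form for $\sum_{n\geq1}T_n(s)U_{n-1}(t)$ yields a positive-kernel representation of $1-P_R$ as a transform of $V'$. Strict convexity of $V$ then delivers both the global positivity of $P_R$ on $[-R,R]$ for $R\leq\Rs$ and the strict monotonicity of the endpoint value as a function of $R$, closing the argument.
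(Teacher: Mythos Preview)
Your proposal is correct and follows essentially the same route as the paper: expand the potential and the density in Chebyshev polynomials, insert into the Euler--Lagrange equation, and match coefficients to obtain $a_0=1$, $a_n=-nc_n(R)$. The paper uses the integrated form of~\eqref{eq:formula:Cheb} (packaged as Lemma~\ref{lem:electrostatic_log-gas}) rather than the pointwise multipole expansion, but this is only a cosmetic difference.

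Two minor deviations are worth flagging. First, for the critical-radius equation the paper starts from the classical Tricomi edge condition~\eqref{eq:edge_log-gas} and rewrites it via the identity for $uT_n'(u)$ (Lemma~\ref{lem:Chebyshevprime}); you instead read off $P_R(\pm R)=1-\sum_{n\ge1}nc_n(R)$ directly from $T_n(1)=1$ and set it to zero. Your route is shorter and avoids Lemma~\ref{lem:Chebyshevprime} entirely, and as a bonus your monotonicity argument actually yields uniqueness of $R_\star$, whereas the paper only asserts ``smallest positive solution''. Second, the paper defers part~(i) (existence, form, positivity of $P_R$ and $Q$) wholesale to classical potential theory~\cite{Deift99}, while you sketch a self-contained positivity argument via the $U_{n-1}$ integration-by-parts identity. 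That sketch is heading in the right direction---it effectively reconstructs the Tricomi principal-value representation of $1-P_R$, from which positivity under strict convexity is standard---but the kernel sum $\sum_{n\ge1}T_n(s)U_{n-1}(t)$ is only a distribution (it reproduces $\tfrac{1}{2}\,\mathrm{PV}\,(s-t)^{-1}$), so the ``positive-kernel'' language should be replaced by the principal-value formulation if you want the argument to stand on its own.
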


In other words,  Eq.~\eqref{eq:sol_log-gas} shows that in the pulled phase the equilibrium density $\rho_R(x)=\rho_{\Rs}(x)$  is supported on a single interval on the real line (\emph{one-cut property}), is strictly positive in the bulk, and vanishes as a square root at the edges $\pm\Rs$ (\emph{off-critical case}). Moreover, since $Q(\Rs)>0$ one gets
 \be P_{\Rs}(\Rs)=0 \quad \text{ but \quad$P_{\Rs}'(\Rs)\neq0$}\ ,
 \ee 
a fact that will be used later on.

On the other hand in the pushed phase, the density is strictly positive in its support and has an integrable singularity at the hard walls $\pm R$ (cf. with the GUE case~\eqref{eq:density_DM}). See Fig.~\ref{fig:scheme}.

\par
A corollary of formulae~\eqref{eq:sol_log-gas}-\eqref{eq:sol_log-gas2} is the first main result of this paper.
\begin{theorem}
\label{thm:log-gas2}
With the assumptions of Theorem~\ref{thm:EM_log-gas}, the excess free energy~\eqref{eq:def_F_gen} of the log-gas is
\be
F(R)=\frac{1}{2}\int_{R\wedge\Rs}^{\Rs}\frac{P_r(r)^2}{r}\de r\ .
\label{eq:simple_log}
\ee
Moreover, $F(R)$ displays the following non-analytic behaviour at $\Rs$:
\ben
F(R) >0, \quad \text{for  $R<  R_{\star}$}, \qquad  F(R) =0, \quad \text{for  $R\geq  R_{\star}$}\ ,
\label{eq:non-analyt_log}
\een
and
\be
F(R)\sim C_{\star}(R_{\star}-R)^3\mathbbm{1}_{R\leq R_{\star}}\ , \quad \text{as} \quad R\to \Rs\ ,
\label{eq:F_vicinity2}
\ee
with $C_{\star}>0$, that is 
\ben
F(\Rs)=F'(\Rs)=F''(\Rs)=0,\quad \text{but}\quad
\lim_{R\uparrow  R_{\star}}F'''(R) = - 3! C_{\star} <0\ .
\een
\end{theorem}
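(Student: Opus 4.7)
The plan is to reduce the theorem to a single derivative identity, namely $F'(R) = -P_R(R)^2/(2R)$ for $R<\Rs$, from which \eqref{eq:simple_log} follows by integration (using $F(\Rs)=0$), and the third-order behaviour at $\Rs$ follows from Taylor-expanding $P_R(R)$ around $R=\Rs$. The statement $F(R)=0$ for $R\ge\Rs$ is immediate since $\rho_R=\rho_{\Rs}$ in the pulled phase.

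For the derivative identity I would first rescale: setting $\eta_R(t) = R\rho_R(Rt)$, the scale invariance $-\log|Rt-Rs| = -\log R - \log|t-s|$ gives $\EE[\rho_R] = -\frac{1}{2}\log R + \widetilde\EE_R[\eta_R]$, where $\widetilde\EE_R$ is the analogous functional with potential $V(R\,\cdot)$ on the fixed domain $\mathcal{P}([-1,1])$. The envelope theorem then yields $F'(R) = -\frac{1}{2R}+\frac{1}{R}\int xV'(x)\de\rho_R(x)$, reducing everything to the single moment $\int xV'\de\rho_R$. I would evaluate this by contour integration of the Cauchy transform $W(z):=\int\de\rho_R(y)/(z-y)$. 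Plemelj together with the Euler--Lagrange equation on $(-R,R)$ give $W(x\pm i0) = V'(x)\mp i\pi\rho_R(x)$, and matching jumps across $[-R,R]$ identifies
\be
W(z) = V'(z) + \frac{P_R(z)}{\sqrt{z^2-R^2}}
\ee
(with the branch of the square root positive for $z>R$), which yields $W(z) = 1/z + \operatorname{O}(1/z^3)$ at infinity. Integrating $zW(z)^2$ counterclockwise around $[-R,R]$ and comparing two evaluations --- the residue at infinity gives $2\pi i$, while the branch-cut jump plus the simple-pole residues of $z\,P_R(z)^2/(z^2-R^2)$ at $z=\pm R$ together give $4i\pi\int xV'\de\rho_R+2\pi i\,P_R(R)^2$ --- I would obtain
\be
\int_{-R}^R xV'(x)\de\rho_R(x) = \frac{1-P_R(R)^2}{2},
\ee
and hence $F'(R) = -P_R(R)^2/(2R)$, which after integration from $R$ to $\Rs$ produces \eqref{eq:simple_log}.

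For the critical behaviour at $\Rs$, Theorem \ref{thm:EM_log-gas} gives $P_R(R) = 1-S(R)$ with $S(R) = \sum_{n\ge 1}nc_n(R)$, and \eqref{eq:radius_log} gives $S(\Rs)=1$, so $P_{\Rs}(\Rs)=0$. Under Assumption \ref{ass:potential_log-gas}, strict convexity of $V$ guarantees the off-critical condition $Q(\Rs)>0$; a matching argument between the pushed-phase formula $P_R(R)=1-S(R)$ and the pulled-phase factorisation $P_{\Rs}(x)=Q(x)(\Rs^2-x^2)$ then forces $S'(\Rs)>0$, so that $P_R(R) = S'(\Rs)(\Rs-R)+\operatorname{O}((\Rs-R)^2)$ as $R\uparrow\Rs$. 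Substituting into \eqref{eq:simple_log} would give
\be
F(R) \sim \frac{S'(\Rs)^2}{6\Rs}(\Rs-R)^3 \quad\text{as } R\uparrow\Rs,
\ee
which is \eqref{eq:F_vicinity2} with $C_\star = S'(\Rs)^2/(6\Rs)>0$ and $\lim_{R\uparrow\Rs}F'''(R) = -6C_\star<0$; strict positivity of $F$ on $[0,\Rs)$ follows from $P_r(r)>0$ there.

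The hardest step is the Cauchy-transform moment identity: one must correctly account for the simple-pole contributions at $z=\pm R$ arising from the rational piece $P_R(z)^2/(z^2-R^2)$ of $W(z)^2$, which are precisely what distinguishes the correct value $(1-P_R(R)^2)/2$ from the ``symmetric'' principal-value answer $1/2$ that one would naively obtain by antisymmetrising $x/(x-y)$ under $(x,y)\mapsto(y,x)$. These pole contributions encode the electrostatic pressure exerted on the hard walls by the square-root density singularity in the pushed phase, and they are the mechanism that makes $F'(R)$ vanish quadratically (rather than generically linearly) at $\Rs$ and hence produces a third-order --- rather than second-order --- transition.
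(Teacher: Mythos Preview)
Your argument is correct and reaches the same derivative identity $F'(R)=-P_R(R)^2/(2R)$ as the paper, but by a genuinely different route. The paper never invokes the envelope theorem or the Stieltjes transform in the proof: it expands $F(R)$ explicitly in the Chebyshev coefficients $c_n(R)$ of $V(R\,\cdot)$, obtaining the closed form $F(R)=\tfrac12\bigl(-\log(R/2)+2c_0(R)-\tfrac12\sum_{n\ge1}nc_n(R)^2\bigr)$, differentiates, and then proves the algebraic identity $1-2Rc_0'(R)+R\sum_{n\ge1}nc_n(R)c_n'(R)=(1-\sum_{n\ge1}nc_n(R))^2$ using Lemma~\ref{lem:Chebyshevprime} to relate $c_n'(R)$ to the $c_m(R)$. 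Your approach is in effect a rigorous version of the paper's \emph{heuristic} electrostatic-pressure computation in Section~\ref{subsec:log}: the residues of $zP_R(z)^2/(z^2-R^2)$ at $z=\pm R$ are exactly the pressure term~\eqref{eq:respp}, and your envelope step replaces the thermodynamic ``work $=-\int P\,\de V$'' argument. What the Chebyshev proof buys is that it works verbatim for $V\in C^3$: the decomposition $W(z)=V'(z)+P_R(z)/\sqrt{z^2-R^2}$ and the subsequent residue calculus require $V'$ (hence $P_R$) to extend analytically off the real axis, so for merely $C^3$ potentials your contour step needs a polynomial-approximation or regularisation argument that you have not supplied. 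Your derivation of $S'(\Rs)>0$ via $Q(\Rs)>0$ is fine; note that the paper phrases the same conclusion as $P_{\Rs}'(\Rs)\neq0$ (spatial derivative), and the two quantities coincide because $\frac{\de}{\de R}P_R(R)=-S'(R)=-\frac{1}{R}\sum_{n\ge1}n^3c_n(R)=\partial_xP_R(x)\big|_{x=R}$.
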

\par
 \begin{remark} In the statement of the results, the family of strictly convex potentials is not the largest class of potentials where the above result holds true. 
 What is really required is that  in the pulled phase  the associated equilibrium measure is supported on a single interval, and vanishes as a square root at the endpoints. (An explicit characterisation of these conditions is quite complicated.)  More precisely, 
Theorem~\ref{thm:EM_log-gas} is true when $V(x)$ is a one-cut  potential. `One-cut' means that the equilibrium measure is supported on a single bounded interval.  Theorem~\ref{thm:log-gas2} is valid under the hypotheses that $V(x)$ is one-cut and off-critical potential. `Off-critical' means that $P_{\Rs}(\Rs)=0$ but $P_{\Rs}'(\Rs)\neq0$ so that $\rho_{R_{\star}}(x)\sim\sqrt{R_{\star}^2-x^2}$ at the edges. This is certainly true for strictly convex potentials. For `critical' potentials the pulled-to-pushed transition is \emph{weaker} than third-order (see Eq.~\eqref{eq:3rd_log-gas_der} in the proof). These potentials are however `exceptional' in the `one-cut' class. The problem for `multi-cut' matrix models remains open.
 \end{remark}
 \par
 \begin{example} We reconsider the free energy of the GUE (the log-gas on the real line in a quadratic potential $V(x)=x^2/2$). There are only two nonzero Chebyshev coefficients~\eqref{eq:Cheb_coeff}
\be
c_0(R)=c_2(R)=\frac{R^2}{4}\ .
\ee 
From~\eqref{eq:radius_log} we read that the critical radius $\Rs$ is the positive solution of $R^2/2=1$, i.e., $\Rs=\sqrt{2}$. From~\eqref{eq:sol_log-gas2}, in the pushed phase $P_R(x)=(2+R^2-2x^2)/2$, so that $P_r(r)=(2-r^2)/2$. Applying the general formula~\eqref{eq:simple_log}, one easily computes the large deviation function as
\be
F_{\text{GUE}}(R)=\frac{1}{2}\int_{R}^{\sqrt{2}}\frac{\left(2-r^2\right)^2}{4r}\de r=\frac{1}{32}\left(8R^2-R^4-16\log R-12+8\log2\right)
\ee
for $R\leq\sqrt{2}$, and zero otherwise. This coincides with the known result~\eqref{eq:rateFGUE}.
\end{example}
\par
Theorem~\ref{thm:log-gas2} and the previous discussion might lead to conclude that the universality of the third-order phase transition is inextricably related to the presence of a Tracy-Widom distribution separating the pushed and pulled phases. A hint that this is not the case comes from the study of extreme statistics of non-Hermitian matrices whose eigenvalues have density in the complex plane.
\subsubsection{Non-Hermitian random matrices} 
\par 
Consider a log-gas in dimension $d=2$ (the `true' $2d$ Coulomb gas in the plane)
\begin{align}
E_N(x_1,\dots,x_N)&=-\frac{1}{2}\sum_{i\neq j}\log|x_i-x_j|+N\sum_k V(x_k),\quad x_i\in \R^2\ .
\label{eq:Pcanonical3}
\end{align}
The gas density for large $N$ converges to the equilibrium measure of the energy functional~\eqref{eq:EF_log} extended to measures supported on the complex plane. When $V(x)=|x|^2/2$ and $\beta=2$, with the identification $\C\simeq \R^2$, the model corresponds to the density of eigenvalues of the complex Ginibre (GinUE) ensemble of random matrices, a non-Hermitian relative of the GUE. The equilibrium measure in this case is uniform in the unit disk (circular law)
\ben
\de \rho_N\to\frac{1}{\pi}\mathbbm{1}_{|x|\leq R_{\star}} \de x\ ,
\een
with $R_{\star}=1$, but the typical fluctuations of the extreme statistics $\max|x_i|$ are \emph{not} in the Tracy-Widom universality class. More precisely, setting $\gamma_N=\log{N}-2\log\log{N}-\log{2\pi}$, Rider~\cite{Rider03} proved that
\ben
\lim_{N\to\infty}\operatorname{Pr}\left(\max|x_i|\leq R_{\star}+\frac{\gamma_N+t}{\sqrt{4N\gamma_N}}\right)=G(t)\ ,
\een
where the limit is the Gumbel distribution $G(t)=\exp(-\exp(-t))$. This result is universal~\cite{Chafai14b} for the log-gas in the plane at inverse temperature $\beta=2$. The atypical fluctuations are described by a large deviation function $F_{\text{GinUE}}(R)$ which can be computed by solving the constrained variational problem of a log-gas in the plane. For the Ginibre ensemble, this was computed by Cunden, Mezzadri and Vivo~\cite{Cunden16} who found\footnote{This formula is also implicit in the work of Allez, Touboul and Wainrib~\cite{Allez14}.}
\be
\de \rho_R(x)=
\begin{cases}
\displaystyle\frac{1}{\pi}\mathbbm{1}_{|x|\leq R} \de x
+(1-R^2) \frac{\delta(|x|-R)}{2\pi R}  &  \text{if $R< R_{\star}$ (pushed phase)}\smallskip\\
\displaystyle\frac{1}{\pi}\mathbbm{1}_{|x|\leq R_{\star}} \de x & \text{if $R\geq R_{\star}$ (pulled phase)}\ ,
\end{cases}
\ee
\ben
F_{\text{GinUE}}(R)=
\begin{cases}
\displaystyle\frac{1}{8}(4R^2-R^4-4\log R-3) &  \text{if $R< R_{\star}$}\smallskip \\  
0 &   \text{if $R\geq R_{\star}$} \ .
\end{cases}
\label{eq:rateFGinUE}
\een
Of course, the explicit form of the large deviation function is specific to the model (for instance $F_{\text{GUE}}(R)\neq F_{\text{GinUE}}(R)$). The surprising fact is that even in dimension $d=2$, the pushed-to-pulled transition of the log-gas is of the third order
\ben
F_{\text{GinUE}}(R)\sim\frac{4}{3}(R_{\star}-R)^3 \mathbbm{1}_{R \leq R_{\star}},\quad\mbox{as } R\to R_{\star}\ .
\een
We remark that, in this case, the order of the phase transition is not predicted by the classical `matching argument'. 
In fact, for the log-gas in $d=2$, the matching between the typical fluctuations (Gumbel) and the large deviations is more subtle due to the presence of an intermediate regime, as found recently in~\cite{Lacroix17,Lacroix18}.
This suggests that the critical exponent `3' is shared by systems with repulsive interaction whose microscopic statistics belongs to different universality classes\footnote{For the $1d$ Coulomb gas on the line, the pulled-to-pushed transition is  of the third-order, despite the fact that the typical fluctuations of extreme particles are neither Tracy-Widom nor Gumbel. See~\cite{Dhar17,Dhar18}.}.  For which interactions can Theorem~\ref{thm:log-gas2} be extended?

\subsection{Beyond random matrices: Yukawa interaction in arbitrary dimension}
The logarithmic interaction $\Phi(x-y)=-\log|x-y|$ is the Coulomb potential in dimension $d=2$, namely the Green's function of the Laplacian on the plane
\ben
-\Delta(-\log|x|)=2\pi\delta(x),\quad x\in\R^2\ .
\een
Based on this observation, in~\cite{Cunden17} we put forward the idea of considering energy functionals whose interaction potential $\Phi$ is the Green's function of some differential operator $\mathrm{D}$:
\ben
\mathrm{D}\Phi(x)=\Omega_d\delta(x),\quad x\in\R^d\ , 
\label{eq:Green}
\een
where the constant $\Omega_d=2\pi^{d/2}/\Gamma\left(d/2\right)$ is the surface area of the unit sphere in $\R^d$ ($\Omega_1=2$, $\Omega_2=2\pi$, $\Omega_3=4\pi$, etc.) 
\subsubsection{Coulomb and Thomas-Fermi gas}
When $\mathrm{D}=-\Delta$ in $\R^d$, the system corresponds to a $d$-dimensional Coulomb gas (a system with long range interaction)
\ben
\Phi(x)=
\begin{cases}
\dfrac{1}{(d-2)}\dfrac{1}{|x|^{d-2}} &\text{if $d\neq 2$}\\
-\log |x|   &\text{if  $d=2$}\ . 
\end{cases}\qquad
\text{(Coulomb)}
\label{eq:Coulombdef}
\een
The constrained variational problem for a $d$-dimensional Coulomb gas (in $\R^d$) can be solved using macroscopic electrostatic considerations. 
\par
Another solvable model is the gas with delta potential corresponding to $\mathrm{D}=I$ (the identity operator) in $\R^d$. In this case, the repulsive kernel is proportional to a delta function (zero range interaction)  
\ben
\Phi(x)=\Omega_d\delta(x),\quad x\in\R^d \ , 
\label{eq:ThomasFermidef}
\een
and the energy  
\ben
\EE[\rho]=\frac{\Omega_d}{2}\int_{\R^{d}}\rho(x)^2\de x+\int_{\R^d}V(x)\rho(x)\de x
\een
 is an energy functional in the \emph{Thomas-Fermi} class. 
\par
\begin{theorem}[\cite{Cunden17,Cunden18}] Let $\Phi$ be  a Coulomb~\eqref{eq:Coulombdef}  or a Thomas-Fermi~\eqref{eq:ThomasFermidef} interaction. Assume that $V(x)$ is $C^{3}(\R^d)$, radially symmetric $V(x)=v(|x|)$,  with $v$ increasing and strictly convex. In the case of Coulomb interaction, assume also the growing condition  $\lim_{|x|\to\infty}\frac{V(x)}{\Phi(x)}=+\infty$. 
Then, the constrained equilibrium measure~\eqref{eq:constmin} of the energy functional~\eqref{eq:EF} for Coulomb and Thomas-Fermi gases is unique and is given by
\be
\de\rho_R(x)=
\begin{cases}
\displaystyle{\dfrac{1}{\Omega_d} \Delta V(x)\, \mathbbm{1}_{|x|\leq  R\wedge R_{\star}}\de x
+ c(R)\,\frac{\delta(|x|-R)}{\Omega_d R^{d-1}}} &\text{(Coulomb)}\\\\
\dfrac{1}{\Omega_d}\left(\mu(R)-V(x)\right) \mathbbm{1}_{|x|\leq  R\wedge R_{\star}}\de x &\text{(Thomas-Fermi)}\ .
\end{cases}
\label{eq:eqmeasureCTF}
\ee
The critical radius $\Rs$ is the unique positive solution of the equation
\be
\begin{cases}
R_\star^{d-1}v'(R_\star)=1 &\text{(Coulomb)}\\\\
v(R_{\star})\dfrac{R_{\star}^d}{d}-\displaystyle\int_0^{R_{\star}} v(r)r^{d-1}\de r=1 &\text{(Thomas-Fermi)}\ ,
\end{cases}
\label{eq:crit_R_CTF}
\ee
while $c(R)$ and $\mu(R)$ are given by
\be
\begin{cases}
c(R)=\max\{0,1-R^{d-1} v'(R)\} &\text{(Coulomb)}\\\\
\mu(R)=\max\left\{v(R_{\star}),\dfrac{d}{R^d}\left(1+\displaystyle\int_0^{R} v(r)r^{d-1}\de r\right)\right\} &\text{(Thomas-Fermi)}\ .
\end{cases}
\label{eq:CTF2}
\ee
\end{theorem}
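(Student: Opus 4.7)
The plan is to obtain the equilibrium measure as the unique solution of the associated Euler--Lagrange (Frostman) variational conditions and then explicitly invert these conditions using the radial symmetry together with the Green-function property of $\Phi$.

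First I would establish existence and uniqueness by a standard direct-method argument. Both kernels~\eqref{eq:Coulombdef} and~\eqref{eq:ThomasFermidef} are strictly positive definite in the sense that $\iint \Phi(x-y)\,d\nu(x)\,d\nu(y)>0$ for any nonzero compactly supported signed measure of zero total mass with finite energy; the external potential $V$ is continuous and coercive (superlogarithmic in the Coulomb case by hypothesis, automatic for Thomas--Fermi); and $\mathcal{P}(B_R)$ is convex and tight. Hence $\mathcal{E}$ is strictly convex and weakly lower semicontinuous on its effective domain, yielding a unique minimiser $\rho_R$. Introducing a Lagrange multiplier $\mu(R)$ for the mass constraint gives
\begin{equation}
U_R(x) := \int \Phi(x-y)\,d\rho_R(y) + V(x) = \mu(R) \text{ on } \mathrm{supp}\,\rho_R, \qquad U_R(x) \geq \mu(R) \text{ on } B_R.
\end{equation}

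For the Coulomb case I would apply the Laplacian to the equality branch. Since $-\Delta\Phi=\Omega_d\delta$, one obtains $\Omega_d\rho_R(x)=\Delta V(x)$ on the interior of the support, which gives the bulk density $\Delta V/\Omega_d$. Strict convexity and radial monotonicity of $v$ ensure positivity, and the radial symmetry of the problem together with convexity of $v$ force the absolutely continuous part to live on a centered ball. The remaining mass, whenever the wall is active ($R< R_\star$), is concentrated on the sphere $\{|x|=R\}$ by the variational inequality and symmetry; Gauss's law in $\R^d$ identifies the enclosed charge with $r^{d-1}v'(r)$, from which $c(R)=\max\{0,1-R^{d-1}v'(R)\}$ follows by imposing the unit mass, and $R_\star$ is characterised by $c(R_\star)=0$, i.e.\ $R_\star^{d-1}v'(R_\star)=1$; monotonicity of $r\mapsto r^{d-1}v'(r)$ (from convexity and positivity of $v'$) guarantees a unique positive root.

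For the Thomas--Fermi case the equality $\Omega_d \rho_R(x) + V(x) = \mu(R)$ holds pointwise where $\rho_R>0$, giving $\rho_R(x) = (\mu(R)-V(x))/\Omega_d$ on that set. Radial symmetry and convexity of $v$ again constrain the support to be a ball $B_{R\wedge R_\star}$; in the pulled phase the density must vanish continuously at $R_\star$ (no hard wall is felt), which pins $\mu(R_\star)=v(R_\star)$, while in the pushed phase the support saturates at $B_R$ and $\mu(R)$ is fixed by $\int\rho_R=1$. Carrying out this integration in spherical coordinates produces the explicit expression for $\mu(R)$ in~\eqref{eq:CTF2}, and equating the two prescriptions at $R=R_\star$ yields the normalisation condition~\eqref{eq:crit_R_CTF}. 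The main technical hurdle is justifying the distributional manipulation on the boundary sphere in the Coulomb case (which requires some care in interpreting $U_R$ across $\{|x|=R\}$) and the smooth matching of the pushed and pulled regimes at $R=R_\star$; once the variational inequality is combined with the radial ansatz, everything else reduces to elementary ODE-type computations, with unique existence of $R_\star$ in both cases following from strict monotonicity in $R$ of the left-hand sides of~\eqref{eq:crit_R_CTF}.
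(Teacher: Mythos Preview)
Your proposal is correct and follows the same variational framework the paper uses. In this paper the theorem is quoted from~\cite{Cunden17,Cunden18} rather than reproved from scratch; the paper's own contribution is to establish the general Yukawa case (Theorem~\ref{thm:EM_Yukawa}) via the E--L conditions of Section~\ref{sec:equilibrium}, an ansatz of the form~\eqref{eq:ansatz2}, and the shell integration Lemma~\ref{lem:shell_thm_Yukawa}, and then to recover the Coulomb and Thomas--Fermi statements as the limits $m\to 0$ and $a\to 0$ (Remark~\ref{rmk:limit}).

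Your route is the direct specialisation of that argument to each endpoint: for Coulomb you replace the Yukawa shell formula by the classical Newton formula~\eqref{eq:Newton} (equivalently Gauss's law), and for Thomas--Fermi the E--L equality is already algebraic so no integral identity is needed. This is shorter and avoids the Bessel-function machinery, at the cost of treating the two cases separately. The one point you correctly flag---ruling out singular components in the bulk and justifying that any excess mass in the pushed Coulomb phase sits on the sphere---is handled in the paper by the mollification argument sketched at the start of Section~\ref{sec:Yukawa} (cf.~\cite{Bernoff11}), which transfers verbatim to your setting. With that ingredient your outline is complete.
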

\par
In the pulled phase, the equilibrium density of the Coulomb gas is supported on the ball of radius $R_{\star}$ and there is no accumulation of charge on the surface ($c(R)=0$ for $R\geq R_{\star}$). 
In the pushed phase, the equilibrium density in the bulk does not change, while an excess charge ($c(R)>0$ for $R<R_{\star}$) accumulates on the surface. 
\par
For the Thomas-Fermi gas,  $R_\star$ -- the edge in the pulled phase -- is determined by the condition that the gas density vanishes on the surface, i.e. $\mu(R_{\star})=v(R_{\star})$ for $R\geq R_{\star}$. In the pushed phase $R<R_{\star}$, the chemical potential increases to keep the normalisation of $\rho_R$, but there is no accumulation of charge on the surface, i.e. singular components in the equilibrium measure (otherwise the energy would diverge).
\par
A direct calculation
yields the free energy~\cite{Cunden17,Cunden18}
\be
F(R)=
\begin{cases}
\displaystyle\dfrac{1}{2} \int_{R\wedge R_{\star}}^{R_{\star}}  \dfrac{c(r)^2}{r^{d-1}}\de r&\text{(Coulomb)}\\\\
\displaystyle\dfrac{1}{2} \int_{ R\wedge R_{\star}}^{\Rs} \big(\mu(r)-v(r)\big)^2r^{d-1}\de r&\text{(Thomas-Fermi)}\ .
\label{eq:rateF_CTF}
\end{cases}
\ee 
From the exact formulae above, one can check that $F(R)$ has a jump in the third-derivative at $R=R_{\star}$.
Therefore, the critical exponent `$3$' is shared by systems with long-range (Coulomb) and zero-range (delta) interaction. This suggests that the third-order phase transition is even more universal than originally expected.
\subsubsection{Yukawa gas in generic dimension}
The ubiquity of this transition calls for a comprehensive theoretical framework, which should be valid irrespective of spatial dimension $d$ and the details of the confining potential $V$, and for the widest class of repulsive interactions $\Phi$.
\par
Fix two positive numbers $a,m>0$, and define $\Phi=\Phi_d$ as the solution of
\be
\mathrm{D}\Phi_d(x)=\Omega_d\delta(x),\quad x\in\R^d,\quad\text{where}\quad
\mathrm{D}=-a^2\Delta+m^2\ .
\label{eq:distr}
\ee
The explicit solution $\Phi_d(x)$ in terms of Bessel functions is~\eqref{eq:Yukawa_kernel}. See~Appendix~\ref{app:Yukawa}.
Note that this kernel naturally interpolates between the Coulomb electrostatic potential in free space (long-range, for $a=1$ and $m=0$), and the delta-like interaction (zero-range, for $a=0$ and $m=1$); intermediate values $a,m>0$ correspond to the Yukawa (or screened Coulomb) potential. 
\par
 \begin{assumptions}
\label{ass:potential_gen}
$V(x)$ is $C^{3}(\R^d)$, radially symmetric $V(x)=v(|x|)$,  with $v$ increasing and strictly convex. 
\end{assumptions}
\par
The condition that $v(r)$ is increasing implies the \emph{confinement} of the gas whenever $m>0$. Indeed, using the asymptotic expansion of $K_{\nu}(z)$ for large argument, the following limit holds~\cite[Eq.~10.25.3]{NIST} 
\begin{equation}
\lim_{r\to\infty}\frac{v(r)}{\varphi_d(r)}=\left(\frac{a}{m}\right)^{\frac{d-3}{2}}\lim_{r\to\infty}v(r)r^{\frac{d-1}{2}}e^{mr/a}=+\infty\ ,
\end{equation}
as long as there is screening ($m>0$). By a routine argument, this implies the existence and uniqueness of the minimiser of $\EE$ 
in $\mathcal{P}(\R^d)$. The minimiser  $\rho_{R_{\star}}$ is compactly supported $\operatorname{supp}\rho_{R_{\star}}=B_{R_{\star}}$ with $R_\star<\infty$. In particular, the support is simply connected. 
\par
The solution of the constrained equilibrium problem for the Yukawa interaction (announced in~\cite{Cunden18}) is stated below\footnote{In dimension $d=3$ the equilibrium density of Yukawa gases in a harmonic potential \emph{without} hard walls appeared in~\cite{Henning07,Levin11}.}.
\begin{theorem}
\label{thm:EM_Yukawa} 
Let $a>0$, $m\geq0$, and let $\Phi_d(x)$ be the Yukawa interaction~\eqref{eq:Yukawa_kernel} solution of~\eqref{eq:distr}.
Let $V(x)$ satisfy Assumption~\ref{ass:potential_gen}. 
In the case of Coulomb interaction, $m=0$, assume also the growing condition  $\lim_{|x|\to\infty}\frac{V(x)}{\Phi_d(x)}=+\infty$. 
Then, the constrained equilibrium measure~\eqref{eq:constmin} of the energy functional~\eqref{eq:EF} for the Yukawa gas is unique and is given by
 \be
\de\rho_R(x)=
\begin{cases}
\displaystyle\frac{1}{\Omega_d} \sigma_R(x)\mathbbm{1}_{|x|\leq R}\de x+ c(R)\,\frac{\delta(|x|-R)}{\Omega_d R^{d-1}}\ &  \text{if $R\leq R_{\star}$ (pushed phase)}\smallskip\\
\displaystyle\frac{1}{\Omega_d}\sigma_{\Rs}(x)\mathbbm{1}_{|x|\leq  R_{\star}}\de x\  & \text{if $R\geq R_{\star}$ (pulled phase)}\ ,
\end{cases} 
\label{eq:rho_Yuk}
\ee
where 
\be
\sigma_R(x)=(-a^2\Delta +m^2)(\mu(R)-V(x))
\label{eq:sigmaR}
\ee
is nonnegative for $|x|\leq R$, and $c(R)\geq0$ with $c(R)=0$ if and only if $R\ge\Rs$. 

The \emph{chemical potential} $\mu(R)$ and the \emph{excess charge} $c(R)$ are explicit functions of  $\Phi_d(x)=\varphi_d(|x|)$, $V(x)=v(|x|)$, and the constants $a$ and $m$:
 \be 
 c(R)=
a \frac{1-\left(a^2-\frac{m^2R}{d}\frac{\varphi_d(R)}{\varphi_d'(R)}\right) v'(R)R^{d-1}-\frac{m^2R^d}{d}v(R)+m^2\displaystyle\int_0^Rr^{d-1}v(r)\de r}{a-\frac{m^2R}{d}\frac{\varphi_d(R)}{a \varphi_d'(R)}}
\mathbbm{1}_{R\leq \Rs} \ ,
 \label{eq:c_Yuk}
 \ee
 \be
 \mu(R)=
  \begin{cases}
 v(R)- \frac{\varphi_d(R)}{a\varphi_d'(R)}\left( a v'(R)+\frac{c(R)}{a R^{d-1}}\right)&\text{for  $R\leq  R_{\star}$}\\\\
 v(R_{\star})- \frac{\varphi_d(R_{\star})}{\varphi_d'(R_{\star})}v'(R_{\star})&\text{for  $R\geq  R_{\star}$} \ . 
  \end{cases}
  \label{eq:mu_Yuk}
 \ee
The critical value $R_{\star}$ is the smallest positive solution of $c(R_{\star})=0$, i.e. the solution of
\be
\left(a^2-\frac{m^2R_{\star}}{d}\frac{\varphi_d(R_{\star})}{\varphi_d'(R_{\star})}\right) v'(R_{\star})R_{\star}^{d-1}+\frac{m^2R_{\star}^d}{d}v(R_{\star})-m^2\int_0^{R_{\star}}r^{d-1}v(r)\de r=1\ .
\label{eq:critR}
\ee
\end{theorem}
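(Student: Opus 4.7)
The proof will proceed by verification: guess the form \eqref{eq:rho_Yuk} and show that it satisfies the Euler--Lagrange conditions of the constrained variational problem, then invoke uniqueness. Existence and uniqueness of the minimizer $\rho_R$ on $\mathcal{P}(B_R)$ are standard: the kernel $\Phi_d$ is positive-definite since $-a^2\Delta+m^2$ is a positive operator (the growth assumption $V/\Phi_d\to\infty$ handles the delicate Coulomb case $m=0$), so the energy $\mathcal{E}$ is strictly convex and lower semicontinuous on the weakly compact set $\mathcal{P}(B_R)$. The radial symmetry of $\Phi_d$ and $V$ forces the minimizer to be radial. The Euler--Lagrange conditions read: there exists $\mu(R)\in\R$ with $U^{\rho_R}(x)+V(x)=\mu(R)$ on $\operatorname{supp}\rho_R$ and $\geq\mu(R)$ elsewhere on $B_R$, where $U^{\rho_R}(x)=\int \Phi_d(x-y)\mathrm{d}\rho_R(y)$.

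The key structural input is that $\Phi_d$ is the Green's function of $-a^2\Delta+m^2$, so that distributionally $(-a^2\Delta+m^2)U^{\rho_R}=\Omega_d\rho_R$. Applying this operator on both sides of the EL identity on the interior of $B_R$ (where strict convexity of $v$ rules out gaps in the support) yields directly
\begin{equation*}
\Omega_d\rho_R(x)=(-a^2\Delta+m^2)(\mu(R)-V(x))=\sigma_R(x)
\end{equation*}
as the bulk density. The only remaining freedom consistent with the constraint $\operatorname{supp}\rho_R\subseteq\bar{B}_R$ is a singular contribution on the hard wall $\{|x|=R\}$, which by radial symmetry must be a uniform surface charge $c(R)\delta(|x|-R)/(\Omega_d R^{d-1})$; this produces the ansatz \eqref{eq:rho_Yuk}. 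The unknowns $\mu(R)$ and $c(R)$ are then determined by two matching conditions at $r=R$: continuity of the potential (still $=\mu(R)-v(R)$ at the wall), and the fact that the exterior potential $U^{\rho_R}(r)$ for $r>R$ solves the homogeneous screened Poisson equation with decay at infinity, hence is proportional to $\varphi_d(r)$. Matching the interior EL identity to this outer solution (using the jump of $r^{d-1}u'(r)$ across $r=R$ induced by the surface layer) provides one linear relation between $\mu(R)$ and $c(R)$; combining with the normalization $\int\rho_R=1$, which after integration by parts using $\Delta v\cdot r^{d-1}=(r^{d-1}v')'$ gives a second relation involving $v(R)$, $v'(R)$, and $\int_0^R r^{d-1}v(r)\mathrm{d}r$, one solves the $2\times 2$ linear system to obtain \eqref{eq:c_Yuk}--\eqref{eq:mu_Yuk}. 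The critical radius $R_\star$ is then defined by $c(R_\star)=0$; monotonicity of $c(R)$, inherited from the strict convexity of $v$, yields \eqref{eq:critR} and the characterization as the smallest positive solution.

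The main obstacle will be verifying the sign conditions that make the ansatz a genuine probability measure: (i) $\sigma_R(x)\geq 0$ on $B_R$, and (ii) $c(R)\geq 0$ for $R\leq R_\star$. For (i), write $\sigma_R(r)=a^2\Delta v(r)+m^2(\mu(R)-v(r))$; radial convexity and monotonicity of $v$ give $\Delta v\geq 0$, and the formula for $\mu(R)$ implies $\mu(R)\geq v(R)\geq v(r)$ for $r\leq R$, the latter reducing to a monotonicity statement for the function $R\mapsto v(R)-\varphi_d(R)v'(R)/(a\varphi_d'(R))\cdot(\text{correction})$ that must be extracted from \eqref{eq:mu_Yuk} using $\varphi_d'<0$ and standard identities for $K_\nu$. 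For (ii), one shows that the numerator of \eqref{eq:c_Yuk} is a decreasing function of $R$ vanishing precisely at $R_\star$, again by convexity of $v$. Once both signs are established, the candidate $\rho_R$ satisfies the EL conditions with equality on $\operatorname{supp}\rho_R=\bar{B}_R$ (pushed) or $\bar{B}_{R_\star}$ (pulled), and the inequality outside the support follows because the perturbation $U^{\rho_R}+V-\mu(R)$ vanishes at $r=R_\star$ and has positive Laplacian (once screening is folded in) past the support, so the uniqueness of the minimizer closes the argument.
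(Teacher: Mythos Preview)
Your proposal is correct and shares the paper's overall architecture: motivate the ansatz \eqref{eq:rho_Yuk} by applying $\mathrm{D}=-a^2\Delta+m^2$ to the Euler--Lagrange equality, determine $(\mu(R),c(R))$ from a $2\times 2$ linear system, and verify positivity (the paper's positivity check mirrors your sign arguments almost verbatim). The genuine difference is in how the first of the two equations is extracted. The paper proves a shell-integration lemma for the Yukawa kernel: the angular average of $\varphi_d(|x-y|)$ over the sphere $|y|=r$ equals $\psi_d(\min\{|x|,r\})\,\varphi_d(\max\{|x|,r\})$, with $\psi_d(r)\propto (mr/a)^{1-d/2}I_{d/2-1}(mr/a)$ the regular companion solution. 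It then inserts this into the E--L condition, differentiates in $|x|$, integrates by parts, and collapses everything via the Bessel Wronskian $\varphi_d'\psi_d-\varphi_d\psi_d'=-1/(a^2 r^{d-1})$ to reach the scalar relation. Your route sidesteps the shell formula: you observe that $U^{\rho_R}$ must be the decaying homogeneous solution $A\varphi_d(r)$ for $r>R$ and a particular solution $(\mu-V)$ plus a regular homogeneous piece $B\psi_d(r)$ for $r<R$, then impose continuity and the normal-derivative jump at $r=R$; the E--L equality is exactly $B=0$, which after eliminating $A$ gives the same relation. Your argument is more structural and explains at once why the continuum of pointwise E--L conditions collapses to a single one; the paper's explicit computation has the compensating advantage of producing $U^{\rho_R}$ in closed form for later reuse.

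Two places in your sketch deserve tightening. First, you should state explicitly that with the chosen $\mu,c$ the interior potential really is $\mu-V$ (i.e.\ $B=0$): this is what closes the loop from ansatz to verified E--L equality, and it follows from uniqueness of the radial boundary-value problem for $-a^2\Delta+m^2$ with regularity at $0$, decay at $\infty$, and the prescribed jump data. Second, your justification of the E--L \emph{inequality} on $B_R\setminus B_{R_\star}$ in the pulled phase (``positive Laplacian once screening is folded in'') does not work as written, since $(-a^2\Delta+m^2)(U^{\rho_{R_\star}}+V-\mu)=-a^2\Delta V+m^2(V-\mu)$ is negative near $R_\star$. A correct argument: with $g(r)=A\varphi_d(r)+v(r)-\mu$ for $r\geq R_\star$ one has $g(R_\star)=g'(R_\star)=0$ (the latter from $c(R_\star)=0$), and $g'(r)=v'(r)-A|\varphi_d'(r)|>0$ for $r>R_\star$ because $v'$ is increasing while $r^{d-1}|\varphi_d'(r)|$ is nonincreasing (differentiate: $(r^{d-1}\varphi_d')'=m^2 r^{d-1}\varphi_d/a^2\geq 0$). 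The paper, incidentally, leaves this inequality implicit as well.
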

In particular: i) in the pulled phase the equilibrium measure is absolutely continuous with respect to the Lebesgue measure on $\R^d$; ii) when the gas is `pushed', the density in the bulk increases by a constant \emph{and} a singular component builds up on the surface of the ball $B_R$. 

\par
Theorem~\ref{thm:EM_Yukawa} implies the second main result in this paper :  \emph{the universality of the jump in the third derivative of excess free energy of a Yukawa gas with constrained volume.} This universality extends to the limit cases $m\to 0$ (Coulomb gas) and $a\to0$ (Thomas-Fermi gas) solved in~\cite{Cunden17} and~\cite{Cunden18}, respectively. See Remark~\ref{rmk:limit} below.

\begin{remark} 
\label{rmk:limit}
One can check that, in the limit cases of Coulomb ($m=0$) and delta interactions ($a\to0$), the expression~\eqref{eq:rho_Yuk}  for the equilibrium measure simplifies as in~\eqref{eq:eqmeasureCTF}. Indeed, Eq.~\eqref{eq:rho_Yuk} for $m=0$ and $a=1$ yields
\begin{equation}
\de \rho_R(x)=\frac{1}{\Omega_d} \Delta V(x)\mathbbm{1}_{|x|\leq R\wedge R_{\star}}\de x+
c(R)\,\frac{\delta(|x|-R)}{\Omega_d R^{d-1}} \ .
\end{equation}
Equation~\eqref{eq:c_Yuk} for $m=0$ becomes $c(R)=(1-v'(R)R^{d-1}) \mathbbm{1}_{R\leq \Rs}$, while  equation~\eqref{eq:critR} for $\Rs$ reduces to $v'(R_{\star})R_{\star}^{d-1}=1$, that is their respective expressions for the Coulomb gas.

The limit $a\to 0$ is more delicate. By using the asymptotic expansion of $K_{\nu}(z)\sim \sqrt{\pi/2 z}$ for large argument $z\to\infty$~\cite[Eq.~10.25.3]{NIST}, one easily gets
\begin{equation}
\frac{\varphi_d(R)}{\varphi_d'(R)} \sim -\frac{a}{m}, \quad \text{as }  a\to 0\ ,
\label{eq:fifi'}
\end{equation}
for $m>0$. Therefore, in the Thomas-Fermi gas there is no condensation, $c(R)\to0$ as $a\to0$. When $a\to 0$ and $m=1$, from equation~\eqref{eq:mu_Yuk} we recover the value of the chemical potential in the Thomas-Fermi gas~\eqref{eq:CTF2}, and
\begin{equation}
\de \rho_R(x)=\frac{1}{\Omega_d}\left(\mu(R)-V(x)\right)\mathbbm{1}_{|x|\leq R\wedge R_{\star}}\de x\ .
\end{equation}
Moreover, equation~\eqref{eq:critR} for the critical radius $\Rs$ reduces to its corresponding expression~\eqref{eq:crit_R_CTF} for the Thomas-Fermi gas.
\end{remark}

\begin{theorem} 
\label{thm:Yukawa} Under the assumptions of Theorem~\ref{thm:EM_Yukawa}, the excess free energy~\eqref{eq:def_F_gen} for a Yukawa gas is given by
\be
F(R)=\frac{1}{2} \int_{R\wedge R_{\star}}^{R_{\star}} \frac{c(r)^2}{a^2r^{d-1}}\de r\ .
\label{eq:simple_Y}
\ee
This formula implies that

\ben
F(R) >0, \quad \text{for  $R<  R_{\star}$}, \qquad  F(R) =0, \quad \text{for  $R\geq  R_{\star}$}\ ,
\label{eq:non-analyt}
\een
and
\be
F(R)\sim C_{\star}(R_{\star}-R)^3\mathbbm{1}_{R\leq R_{\star}}\ , \quad \text{as} \quad R\to \Rs\ ,
\label{eq:F_vicinity3}
\ee
with $C_{\star}>0$, that is 
\ben
F(\Rs)=F'(\Rs)=F''(\Rs)=0,\quad \text{but}\quad
\lim_{R\uparrow  R_{\star}}F'''(R) = - 3! C_{\star} <0\ .
\een

\end{theorem}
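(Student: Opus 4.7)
The plan is to first establish the integral identity~\eqref{eq:simple_Y}, and then deduce the non-analyticity~\eqref{eq:non-analyt}--\eqref{eq:F_vicinity3} as an elementary corollary. My starting point is the Euler-Lagrange equation for the minimizer $\rho_R$ of Theorem~\ref{thm:EM_Yukawa}, namely
\begin{equation}
\int \Phi_d(x-y)\de\rho_R(y)+V(x)=\mu(R),\qquad x\in\operatorname{supp}(\rho_R).
\end{equation}
Integrating against $\de\rho_R$ and using $\int\de\rho_R=1$ produces the compact identity
\begin{equation}
\EE[\rho_R]=\tfrac{1}{2}\mu(R)+\tfrac{1}{2}\int V\,\de\rho_R,
\end{equation}
which reduces the computation of $F(R)=\EE[\rho_R]-\EE[\rho_{R_\star}]$ to tracking the scalar $\mu(R)$ and a single one-body integral with explicit $R$-dependence.

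The core step is to prove that, throughout the pushed phase $R<R_\star$,
\begin{equation}
F'(R)=-\frac{c(R)^2}{2a^2R^{d-1}}.
\label{eq:Fprime_plan}
\end{equation}
I would attack this by direct differentiation, exploiting the bulk-plus-surface decomposition of $\rho_R$ in~\eqref{eq:rho_Yuk} together with the explicit formulae~\eqref{eq:c_Yuk}--\eqref{eq:mu_Yuk} for $c(R)$ and $\mu(R)$. As a conceptual guide I would use the following Maxwell-stress argument: the mean potential $U_R=\Phi_d*\rho_R$ solves $(-a^2\Delta+m^2)U_R=\Omega_d\rho_R$ and develops, across $\partial B_R$, the jump $U_R'(R^-)-U_R'(R^+)=c(R)/(a^2R^{d-1})$ that encodes the surface charge; applying Green's identity to $U_R$ on either side of $\partial B_R$ together with the equilibrium relation $U_R+V=\mu(R)$ in the bulk predicts precisely the right-hand side of~\eqref{eq:Fprime_plan}, the factor $a^{-2}$ being inherited from~\eqref{eq:distr}. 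Once~\eqref{eq:Fprime_plan} is established, integrating from $R$ to $R_\star$ with the boundary condition $F(R_\star)=0$ (which holds because $\rho_R=\rho_{R_\star}$ for $R\ge R_\star$) yields the identity~\eqref{eq:simple_Y}, and the strict positivity $F(R)>0$ for $R<R_\star$ follows from $c(r)>0$ on $[R,R_\star)$.

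For the non-analyticity, from~\eqref{eq:simple_Y} I read $F'(R)=-c(R)^2/(2a^2R^{d-1})$, and since $c(R_\star)=0$ by the defining equation~\eqref{eq:critR}, both $F'(R_\star)$ and $F''(R_\star)$ vanish automatically, while
\begin{equation}
\lim_{R\uparrow R_\star}F'''(R)=-\frac{c'(R_\star)^2}{a^2R_\star^{d-1}}.
\end{equation}
The final ingredient is the verification that $c'(R_\star)\neq 0$, which I would obtain by differentiating the numerator of~\eqref{eq:c_Yuk} at $R=R_\star$: the constant term cancels via~\eqref{eq:critR} and the surviving linear combination of $v'(R_\star)$ and $v''(R_\star)$ carries a strict sign by Assumption~\ref{ass:potential_gen} (monotonicity and strict convexity of $v$). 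This gives $C_\star=c'(R_\star)^2/(6a^2R_\star^{d-1})>0$. The main obstacle lies in the explicit derivation of~\eqref{eq:Fprime_plan}: a naive differentiation of $\tfrac{1}{2}\mu(R)+\tfrac{1}{2}\int V\,\de\rho_R$ generates many unrelated-looking terms — bulk contributions from $\sigma_R$, surface contributions $c(R)v(R)$, and $R$-derivatives of the combination $\varphi_d(R)/\varphi_d'(R)$ that appears in~\eqref{eq:c_Yuk}--\eqref{eq:mu_Yuk} — and the collapse to the clean right-hand side of~\eqref{eq:Fprime_plan} is not apparent without the Maxwell-stress organizing principle.
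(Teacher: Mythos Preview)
Your strategy is sound and leads to the same endpoint, but it takes a genuinely different route from the paper. The paper does \emph{not} differentiate $F(R)$; instead it evaluates $\EE[\rho_R]$ directly by plugging the explicit minimiser~\eqref{eq:rho_Yuk} into the double integral and reducing the angular integrations through the shell formula of Lemma~\ref{lem:shell_thm_Yukawa} (the Yukawa analogue of Newton's theorem). The integral formula~\eqref{eq:simple_Y} then drops out of that direct computation. Your plan, by contrast, is to establish the pointwise identity $F'(R)=-c(R)^2/(2a^2R^{d-1})$ via the Euler--Lagrange relation $\EE[\rho_R]=\tfrac12\mu(R)+\tfrac12\int V\,\de\rho_R$ and a Green/Maxwell-stress computation across $\partial B_R$; this is essentially a rigorous version of the heuristic pressure argument that the paper gives separately in Section~1.4. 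What the paper's route buys is that the shell lemma does all the bookkeeping at once; what your route buys is a more conceptual derivation that never requires evaluating the full two-body energy. Your honest caveat about the ``collapse'' of terms in~\eqref{eq:Fprime_plan} is well placed: that collapse is precisely where the paper leans on Lemma~\ref{lem:shell_thm_Yukawa} and the Wronskian identity $\varphi_d'\psi_d-\varphi_d\psi_d'=-1/(a^2 r^{d-1})$, so you should expect to need an analogous identity (or the jump relation you describe) to close your argument.

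For the non-analytic part your reasoning matches the paper's exactly through $F''(R_\star)=0$ and $\lim_{R\uparrow R_\star}F'''(R)=-c'(R_\star)^2/(a^2R_\star^{d-1})$. The one place you diverge is in showing $c'(R_\star)\neq0$: the paper does not differentiate the numerator of~\eqref{eq:c_Yuk} but instead invokes the monotonicity argument of Remark~\ref{rmk:positivity}, where $c(R)$ is shown to be \emph{strictly decreasing} on $(0,R_\star]$ via the monotonicity of the Bessel-ratio $K_{d/2+1}/K_{d/2}$ together with Assumption~\ref{ass:potential_gen}. Your direct-differentiation approach is also viable, but be aware that the derivative of the numerator at $R_\star$ involves more than just $v'(R_\star)$ and $v''(R_\star)$ --- the Bessel-ratio term $\varphi_d/\varphi_d'$ contributes as well --- so the sign argument is slightly more delicate than you indicate; borrowing the paper's monotonicity observation is the cleaner way to secure $c'(R_\star)<0$.
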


\begin{remark} 
\label{rmk:limit2}
The formula of the free energy $F(R)$ clearly matches~\eqref{eq:rateF_CTF} for $m=0$ and $a=1$. 
The limit $a\to 0$ can be obtained as above by using the limit~\eqref{eq:fifi'}. Indeed from the expression ~\eqref{eq:c_Yuk} of $c(R)$ one gets for $m=1$ and $R\leq \Rs$
\begin{equation}
\lim_{a\to 0} \frac{c(R)}{a} = \frac{1-\frac{R^d}{d}v(R)+\displaystyle\int_0^Rr^{d-1}v(r)\de r}{\frac{R}{d}}
= R^{d-1} \bigl(\mu(R) - v(R) \bigr) \ ,
\end{equation}
and from~\eqref{eq:simple_Y} we recover the expression of the free energy of the Thomas-Fermi gas~\eqref{eq:rateF_CTF}.
\end{remark}

\begin{example} Consider the two-dimensional gas $d=2$ with $a=1$, $m=0$ (Coulomb gas) in a quadratic potential $v(r)=r^2/2$. This coincides with the eigenvalue gas of the GinUE. An easy calculation from~\eqref{eq:c_Yuk} and~\eqref{eq:critR} shows that the excess charge is
\be
c(R)=1-R^2\ ,
\ee
and the critical radius is $\Rs=1$. The excess free energy can by easily computed using~\eqref{eq:simple_Y} 
\be
F_{\text{GinUE}}(R)=\frac{1}{2}\int_{R}^{1}\frac{\left(1-r^2\right)^2}{r}\de r=\frac{1}{8}(4R^2-R^4-4\log R-3) 
\ee
for $R\leq 1$, and zero otherwise, in agreement with~\eqref{eq:rateFGinUE}.
\end{example}

\subsection{Order parameter of the transition}
The universal formulae~\eqref{eq:simple_log} and~\eqref{eq:simple_Y} for the free energy $F(R)$ are remarkably simple. It feels natural to ask whether it is possible to derive them from a simpler physical argument. 
Moreover, it would be desirable to express the free energy in terms of a quantity that captures the non-analytic behaviour in the vicinity of the transition. This quantity, traditionally called \emph{order parameter} of the transition, must be zero in one phase and nonzero in the other phase. 
\par
In the following, we outline a heuristic argument that reproduces formulae~\eqref{eq:simple_log} and~\eqref{eq:simple_Y}, and identifies the `electrostatic pressure' as the order parameter of the transition.
\subsubsection{Electrostatic pressure: screened Coulomb interaction} 
For the pushed-to-pulled phase transition we can identify the order parameter as follows. Note that the derivative of $F(R)$ is essentially the variation of the free energy with respect to the volume of the system. The volume of the ball is $\operatorname{vol}(B_R)=\Omega_d R^d/d$, so that
\be
\frac{\de F}{\de R}=\left(\frac{\partial \mathrm{vol}}{\partial R}\right)\left(\frac{\partial F}{\partial \mathrm{vol}}\right)=\Omega_d R^{d-1}\left(\frac{\partial F}{\partial  \mathrm{vol}}\right)=\Omega_d R^{d-1} P\ ,
\ee
where $P=\left(\frac{\partial F}{\partial  \mathrm{vol}}\right)$ is the \emph{pressure} of the gas confined in $B_R$. 

The increase in free energy of the constrained gas should match the work $W_{R_{\star}\to R}$ done in a  compression of the gas from the initial volume $\operatorname{vol}_{i}=\operatorname{vol}(B_{R_{\star}})$ to the final volume $\operatorname{vol}_{f}=\operatorname{vol}(B_{R})$, the system being in equilibrium with density $\rho_r$  at each intermediate stage $R\leq r\leq R_{\star}$. In formulae, 
\be
F(R)=-W_{R_{\star}\to R}\quad\text{with}\quad W_{R_{\star}\to R}=\int_{\operatorname{vol}_i}^{\operatorname{vol}_f} P\, \de V =\Omega_d \int_{R_\star}^R p(r)r^{d-1}\de r\ ,
\label{eq:work}
\ee
where $P=p(r)$ is the pressure on the gas confined in $B_r$. In other words, $p(r)\Omega_d r^{d-1}\de r$ is the elementary work done on the surface of the ball of radius $r$ being compressed from $r+\de r$ to $r$. 
We now show that the `electrostatic' pressure for a Yukawa gas is quadratic in the excess charge on the surface in the pushed phase, and zero in the pulled phase.

\par
How to compute the pressure exerted by the surface's field on itself? The argument that follows is similar to the one used to evaluate the `electrostatic pressure' on a layer of charges, e.g. the surface of a charged conductor. This is a problem that some textbooks in electrostatics occasionally mention (see the classical books of Jackson~
\cite[Section 3.13]{Jackson} and Purcell~\cite[Section 1.14]{Purcell}), but which is rarely discussed due to the difficulty of making the argument rigorous for conductors of generic shapes. To stress the analogy and for the lack of a better terminology, in the following we will keep using the expressions `electrostatic' pressure, force, field, Gauss's law, etc. even though the interaction we are considering is Yukawa (screened Coulomb).  
\par
When the system is confined in a ball $B_r$ at density $\rho_r$, the pressure is given by the normal force per unit area, 
\begin{equation}
p(r) = \lim_{\Delta A\to0}\frac{|\Delta F_n|}{\Delta A}\ ,
\end{equation}
where $\Delta A=r^{d-1}\Delta\Omega$ is a small area on the sphere of radius $r$ . An intuitive guess for the force $\Delta F_n$ experienced by the surface element $\Delta A$ during the compression is 
\begin{equation}
\Delta F_n\stackrel{?}{=}(\text{charge in $\Delta A$})\times (\text{electrostatic field across $\Delta A$})\ .
\label{eq:force_guess}
\end{equation}
This is almost right, but it contains a serious flaw. Indeed, note that the electric field across $\Delta A$ includes contributions from the \emph{total} amount of charge on the sphere -- thus including the charge that is being acted upon by the field! We must therefore be 
`over-counting', as the charge on $\Delta A$  cannot act upon itself. To fix the over-counting, we may imagine to open a small hole corresponding to $\Delta A$ and compute the electrostatic field produced by the charge distribution $\rho_r$ minus the hole. A glance at Fig.~\ref{fig:layer} may be helpful. The electrostatic field in the hole is the field experienced by the charges in $\Delta A$. Hence, the force acting on $\Delta A$ is 
\begin{equation}
\Delta F_n=\text{(charge in $\Delta A$})\times (\text{electrostatic field in the hole})\ .
\label{eq:force_OK}
\end{equation}
\par
When the gas density is $\rho_r$, the amount of charge in $\Delta A$ is clearly given by 
\begin{equation}
(\text{charge in $\Delta A$})=c(r)\frac{\Delta A}{\Omega_d r^{d-1}}\ ,
\end{equation}
as the charge in the bulk is a continuous distribution and does not contribute.
\begin{figure}
\centering
  \includegraphics[width=.85\columnwidth]{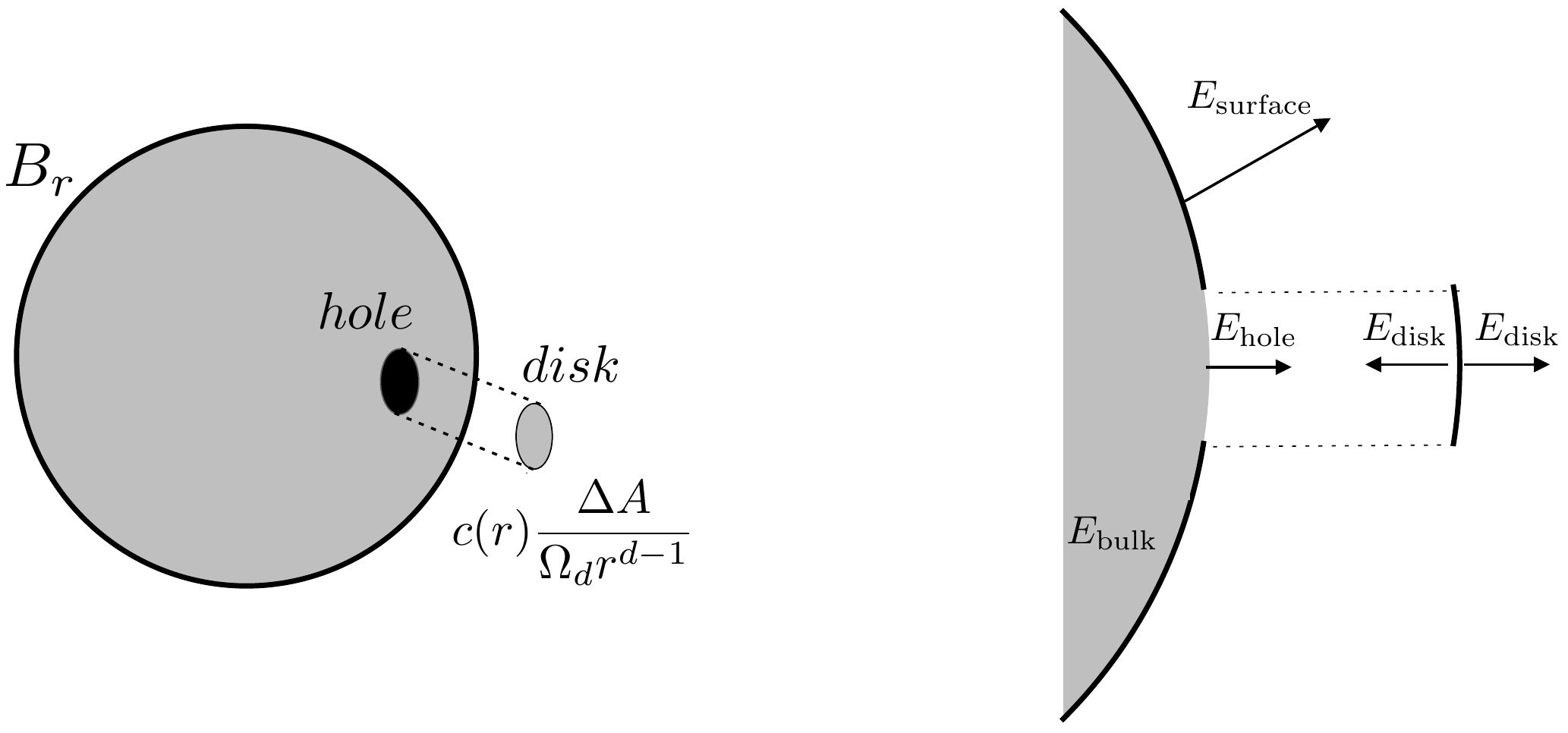}
\caption{The electrostatic pressure on the spere. Left: consider removing a small disk of area $\Delta A$ from the surface of the sphere. The electric field experienced by a point in $\Delta A$ on the sphere corresponds to the field on the hole generated by the sphere  with the small disk removed (in the limit where $\Delta A\to 0$). Right: the electric field on the hole can be computed from the total electric field generated by the sphere and the field generated by the disk using the superposition principle. Note that in the bulk the electric field must be zero at equilibrium. Therefore $E_{\text{hole}}$ must be equal in magnitude to $E_{\text{disk}}$ and directed outward.}
\label{fig:layer}       
\end{figure}
There is a clever argument to compute the electric field in the hole: we know that the field in the hole plus the field in the disk gives the field produced by the sphere 
\begin{equation*}
E_{\text{hole}}+E_{\text{disk}}=E_{\text{sphere}}=
\begin{cases}
E_{\text{surface}}, &\text{just outside the sphere}\\
E_{\text{bulk}}, &\text{just inside the sphere}\ .
\end{cases}
\end{equation*}
From basic considerations, it is clear that the field $E_{\text{bulk}}$ is zero inside the ball $B_r$  (a consequence of electrostatic equilibrium) and  perpendicular to the surface immediately outside the ball (a necessary condition for electrostatic equilibrium). Therefore $E_{\text{hole}}=-E_{\text{disk}}$. 

What is $E_{\text{disk}}$?  
Denote by $\Phi^r (x)$ the potential generated  at $x\in\R^d$ by the  charge in $\Delta A$. The electrostatic field produced by this tiny amount of charge is $-\nabla\Phi^r (x)$. Fortunately, we only need to know this in the vicinity of the disk $E_{\text{disk}}=-\nabla\Phi^r(x)$ with $|x|=r$, where the disk can be approximated by a planar surface with uniform density $c(r)\frac{\Delta A}{\Omega_d r^{d-1}}$, see Fig.~\ref{fig:layer}. Consider a small cylinder $C(h,\Delta A)$ of height $2h$ and base $\Delta A$ cutting across the surface of the ball of radius $r$ as in  Fig.~\ref{fig:Gauss}.  To compute the field on the surface, one should integrate the screened Poisson equation 
\begin{equation}
(-a^2\Delta+m^2)\Phi^r(x) =\Omega_d c(r)\frac{\Delta A}{\Omega_d r^{d-1}}\mathbbm{1}_{x\in\Delta A}\ .
\label{eq:screened_Poisson}
\end{equation}
If we integrate~\eqref{eq:screened_Poisson} over the small cylinder (see Fig.~\ref{fig:Gauss}), we get 
\begin{equation}
-a^2\int_{C(h, \Delta A)}\Delta\Phi^r(x)\de x+m^2\int_{C(h,\Delta A)}\Phi^r(x)\de x=c(r)\frac{\Delta A}{r^{d-1}}\ .
\end{equation}
By symmetry, the electrostatic field $ -\nabla\Phi^r$ is perpendicular to the bases, along $\hat{x}=x/r$, and, using the divergence theorem,
\begin{equation}
-a^2 \Delta A \left[\nabla\Phi^r(x+h \hat{x} )\cdot \hat{x} - \nabla\Phi^r(x-h \hat{x} )\cdot \hat{x}  \right]+m^2\int_{C(h,\Delta A)}\Phi^r(x)\de x=c(r)\frac{\Delta A}{r^{d-1}}\ .
\end{equation}
Letting $h\to0$,  the volume integral on the left hand side vanishes and we get
\begin{equation}
E_{\text{disk}}=-\nabla\Phi^r(x)=
\begin{cases}
+\dfrac{1}{2}\dfrac{c(r)}{a^2r^{d-1}} \hat{x} &\text{immediately outside the disk}\\\\
-\dfrac{1}{2}\dfrac{c(r)}{a^2r^{d-1}} \hat{x} &\text{immediately inside the disk}\ .
\end{cases}
\end{equation}
Therefore, we conclude that
\be
E_{\text{hole}}=\frac{1}{2}\dfrac{c(r)}{a^2r^d}x\ .
\label{eq:E_hole}
\ee
As a byproduct, we see instead that $E_{\text{surface}}=\frac{c(r)}{a^2r^d}x$; this is the familiar statement that, at equilibrium, the electrostatic field generated by a charged conductor immediately outside is perpendicular to its surface and proportional to the charge density.
\par
 Putting everything together, we obtain 
\begin{equation}
p(r) = \lim_{\Delta A\to0}\frac{1}{\Delta A}\left(c(r)\frac{\Delta A}{\Omega_d r^{d-1}}\times\frac{1}{2}\dfrac{c(r)}{a^2r^{d-1}}\right)=\frac{1}{2}\frac{1}{\Omega_d r^{d-1}}\frac{c^2(r)}{a^2 r^{d-1}}\ .
\label{eq:pressure_res}
\end{equation}
Plugging~\eqref{eq:pressure_res} into~\eqref{eq:work} we get
\be
W_{R_{\star}\to R}=\Omega_d \int_{R_\star}^R p(r)r^{d-1}\de r=- \frac{1}{2} \int_{R}^{R_{\star}}\frac{c^2(r)}{a^2 r^{d-1}}\de r\ ,
\ee  
which is exactly (minus) the excess free energy~\eqref{eq:simple_Y}.
\begin{figure}
\centering
  \includegraphics[width=.55\columnwidth]{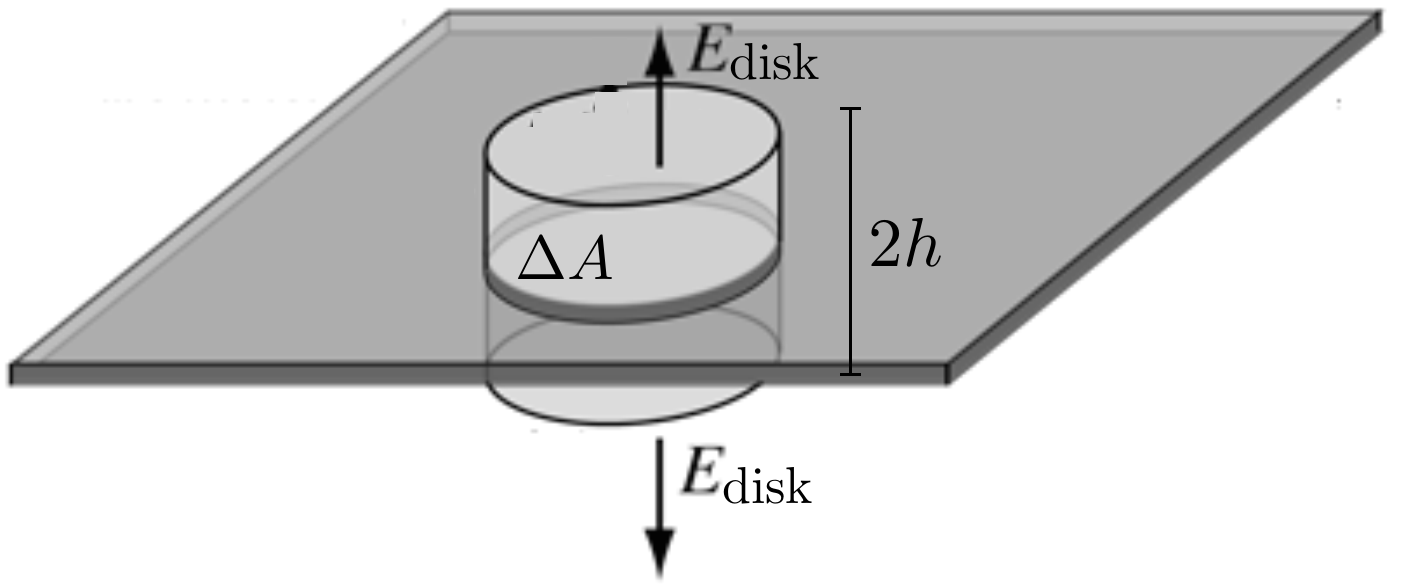}
\caption{The field in the vicinity of the sphere can be computed integrating the screened Poisson equation over a small volume enclosed by a  `Gauss surface'. (At that scale, the sphere can be approximated by an infinite plane.)}
\label{fig:Gauss}       
\end{figure}
\subsubsection{Electrostatic pressure: random matrices} 
\label{subsec:log}
The argument outlined above can be repeated almost verbatim for the log-gas on the line (eigenvalues of random matrices). However there is a twist in the computation. Again, by conservation of energy, the increase in free energy must match  the work $W_{R_{\star}\to R}$ done in a  compression of the gas from the initial volume (length) $\mathrm{vol}_{i}=\operatorname{vol}(B_{R_{\star}})=2 R_{\star}$ to the final volume $\mathrm{vol}_{f}=\operatorname{vol}(B_{R})=2R$, with the system in equilibrium with density $\rho_r$  at each intermediate stage $R\leq r\leq R_{\star}$. In formulae, 
\be
F(R)=-W_{R_{\star}\to R}\quad\text{with}\quad W_{R_{\star}\to R}=\int_{\mathrm{vol}_i}^{\mathrm{vol}_f} P\,\de V =2 \int_{R_\star}^R p(r)\de r\ ,
\label{eq:work2}
\ee
where $2p(r)\de r$ is the elementary work done in an infinitesimal compression (the factor $2$ comes from  axial symmetry). 
\par
When the system is confined in a ball $B_r$ at density $\rho_r$, the pressure is given by the normal force per unit length. The force $F_n(x)$ at point $x$ is equal to the charge $\rho_r(x)\de x$ in the infinitesimal segment $\de x$ around $x$ times the electric field. To proceed in the computation it is convenient to use complex coordinates (recall that $-\log|x|$ is the Coulomb interaction in dimension $d=2$).

The electric field generated by $\rho_r(y)\de y$ at $z\in\C\setminus[-r,r]$ is the Stieltjes transform 
\ben
G(z)=\int\frac{\rho_r(y)}{z-y}\de y\ .
\een
Note that $\rho_r(z)=0$ when $z\notin[-r,r]$.
Using Gauss's theorem (Plemelj formula), when $z$ approaches the real axis, the field generated by $\rho_r$ is 
\be
\lim_{\epsilon\downarrow 0}G(x\pm \mathrm{i}\epsilon)= \fint \frac{\rho_r(y)}{x-y}\de y\mp \mathrm{i} \pi\rho_r(x) \ = V'(x)\mp \mathrm{i} \pi\rho_r(x)\ ,
\ee
when $|x|<r$, where $\fint$ denotes Cauchy's principal value. Note that, at equilibrium, the real part of $G(x\pm \mathrm{i}\epsilon)$ cancels with the field $-V'(x)$ generated by the external potential $V(x)$ (the net tangential field must be zero).
Therefore, the electric field experienced by a point in the vicinity of $x$ is 
\begin{align}
\operatorname{Re} E&= \lim_{\epsilon\downarrow 0}\operatorname{Re} G(x\pm \mathrm{i}\epsilon)-V'(x)=0\\
\operatorname{Im} E&= \lim_{\epsilon\downarrow 0}\operatorname{Im} G(x\pm \mathrm{i}\epsilon)=\mp  \pi\rho_r(x)\ .
\end{align}

The force (\text{$=$electric field$\times$ charge}) per unit length is
\ben
\lim_{\Delta \ell\to0}\frac{F_n(z)}{\Delta \ell}=E(z)\rho_r(z)\ .
\een
In the vicinity of $x$ it becomes
\be
\lim_{\epsilon\downarrow 0}\lim_{\Delta \ell\to0}\frac{F_n(x\pm \mathrm{i}\epsilon)}{\Delta \ell}=\mp\mathrm{i} \pi\rho_r^2(x)=\mp \frac{\mathrm{i}}{\pi}\frac{P_r^2(x)}{r^2-x^2}\ .
\label{eq:force_log-gas}
\ee
Note, in particular, that the tangential force experienced by a point $x$ inside the conductor is zero (as it should be at equilibrium). 

At the edge $x=r$ (similar considerations for $x=-r$) the situation is more delicate. By symmetry, the total field $E_{\text{edge}}$ generated by $\rho_r(x)$ at $x=r$ must be directed along the $x$-axis, but must be zero for $x<r-\epsilon$. Therefore, repeating the argument of the previous section, the field experienced by the `hole' at the edge $x=r$ is half the field generated by $\rho_r(x)$ at $r+\epsilon$, i.e. $E_{\text{hole}}=(1/2)E_{\text{edge}}$.

To compute the pressure, we look at the edge $x=r$, we sum the forces on a small circular contour of radius $\epsilon$ centred at $ r$, and then we take the limit $\epsilon\to0$ (see Fig.~\ref{fig:contour})
\begin{align}
p(r)=\lim_{\Delta \ell\to0}\frac{|F_n(r)|}{\Delta \ell}=\left|\lim_{\epsilon\to0}\int_{c_{\epsilon}}\frac{1}{2}\frac{\mathrm{i}}{\pi}\frac{P_r^2(z)}{ r^2-z^2}\de z\right| ,
\end{align}
where the factor $1/2$ comes from the fact that $E_{\text{hole}}=(1/2)E_{\text{edge}}$.
Remembering that the force is eventually zero on the semicircular part of the contour in the bulk,  the integral is given by $\pi \mathrm{i}$ times the residue at $z=r$:
\be
p(r)=\left|\pi \mathrm{i}\operatorname{Res}_{z=r}\frac{\mathrm{i} \, P_r^2(z)}{2\pi (r^2-z^2)}\right|=\frac{P_r^2(r)}{4r}\ .
\label{eq:respp}
\ee
Inserting this formula in~\eqref{eq:work2} we indeed recover~\eqref{eq:simple_log}.
\begin{remark}
\label{rmk:nonsym}
These electrostatic considerations indicate a route to compute large deviation functions for extreme eigenvalues of random matrices more general that those fulfilling Assumption~\ref{ass:potential_log-gas}. For instance, one may ask whether it is possible to obtain an electrostatic formula for the large deviation of the \emph{top eigenvalue} $x_{\max}$  (i.e. the rightmost particle) of a random matrix from a $\beta$-ensemble. While this question does not fit into the symmetric setting considered so far, it can be nevertheless easily answered within the electrostatic framework developed above. For a one-cut matrix model with typical top eigenvalue equal to $b_{\star}$, the rate function function $F(b)$ in the large deviation decay
\be
\operatorname{Pr}\left(x_{\max}\leq b \right)\approx e^{-\beta N^2 F(b)},
\ee
is zero in the `pulled phase' ($F(b)=0$ for $b\geq b_{\star}$) and nonnegative on the left  $b\leq b_{\star}$. One can reproduce the previous heuristic considerations and argue that the rate function is the work done in pushing the gas with a hard wall from $b_{\star}$ to $b$
\be
F(b)=-\int_{b_{\star}}^{b }p(u)\de u\ ,
\label{eq:rateF_RMT}
\ee
where the electrostatic `pressure' experienced by the gas is now given by
\be
p(b)=\frac{\pi^2}{2}|\operatorname{Res}_{z=b}\rho_b^2(z)|\ .
\label{eq:respp2}
\ee
$\rho_b(x)$ is the density of the log-gas (with support in $[a,b]$) constrained to stay on the left of the hard wall at $x=b$.
\par
One can easily check the validity of this formula for a few cases already considered in previous works. For instance, the constrained density of the GUE ensemble with one hard wall is~\cite{Dean06,Dean08}
\be
\rho_b(x)=\frac{1}{2\pi}\sqrt{\frac{x-a}{b-x}}(b-a-2x),\quad\text{with $a=-\frac{2\sqrt{b^2+6}-b}{3}$}\ .
\ee
In this case 
\be
p(u)=\frac{\pi^2}{2}|\operatorname{Res}_{z=u}\rho_u^2(z)|=\frac{1}{27} \left(u^3+\sqrt{u^2+6} u^2+6 \sqrt{u^2+6}-18 u\right),
\ee
and inserting this expression into~\eqref{eq:rateF_RMT} one recovers the known result~\cite[Eq. (21)]{Majumdar14}. 
\par
The calculations are similar for random matrices from the Wishart ensemble (a log-gas on the positive half-line). For simplicity we report the calculation for matrices with $c=1$ (see~\cite[Sec. 3.1]{VMB07}) where the equilibrium measure is the Marchenko-Pastur distribution supported on the interval $[0,4]$. The constrained density with a wall at $b\leq4$ reads 
\be
\rho_b(x)=\frac{1}{2\pi}\frac{b/2+2-x}{\sqrt{x(b-x)}}\ .
\ee
Computing the residue
\be
p(u)=\frac{\pi^2}{2}|\operatorname{Res}_{z=u}\rho_u^2(z)|=\frac{u^2-8 u+16}{32 u}\ .
\ee
The rate function for $b\leq 4$ is
\be
F(b)=-\int_{4}^{b}p(u)\de u=-\frac{b^2}{64}+\frac{b}{4}-\frac{1}{2} \log \frac{b}{4}-\frac{3}{4}\ ,
\ee
which coincides with~\cite[Eq. (35)-(36)]{VMB07}. 
\end{remark}
\begin{remark}
The calculation of the work done in a compression of a log-gas in dimension $d=1$ bears a strong resemblance to a way to calculate the work (energy) per unit fracture length for a crack propagating in a continuous medium. In linear elastic fracture mechanics, Cherepanov~\cite{Cherepanov67} and Rice~\cite{Rice68} have independently developed a line integral called the $J$-integral that is contour independent. The usefulness of this integral comes about when the contour encloses the crack-tip region, as this is where the most intense (actually divergent) fields are found (c.f. the edge of the cut in the log-gas). Evaluating the $J$-integral then gives the variation of elastic energy. An analogous integral in electrostatics has been discovered later~\cite{Garboczi88}. 

It is likely that the computations outlined above can be recast in the language of linear elastic mechanics/electrostatics. The  link between `eigenvalues of random matrices' and  the `theory of fractures' suggested here will be explored in future works.

\end{remark}
\begin{figure}
\centering
  \includegraphics[width=1\columnwidth]{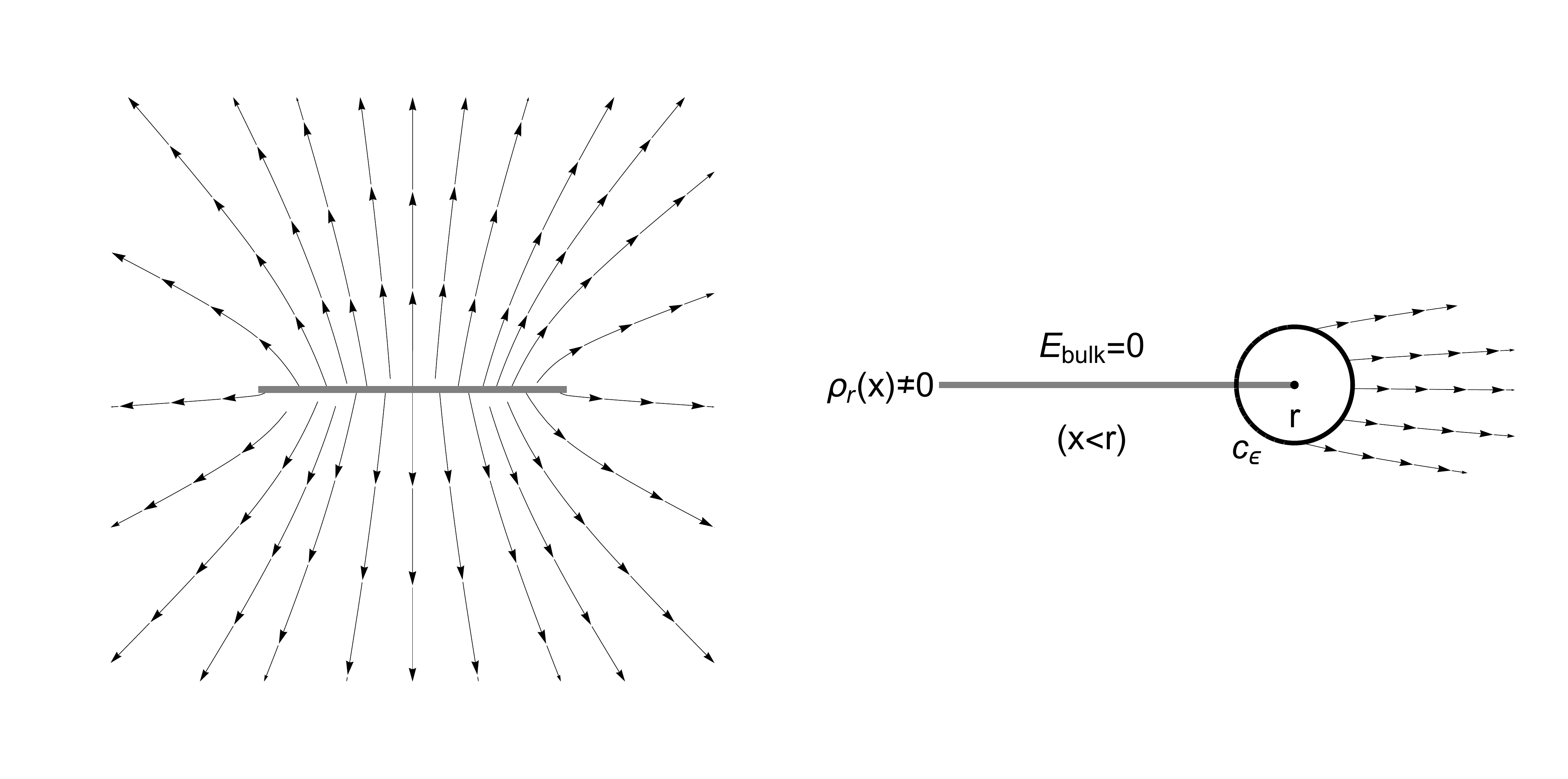}
\caption{Left: Electrostatic field generated by the pushed log-gas $\rho_r(x)$ (GUE in this example). Right: Contour of integration $c_{\epsilon}$ around the edge $x=r$. The field on the contour is not zero. As $\epsilon\to0$ only the right semicircular arc contributes to the integral. } 
\label{fig:contour}       
\end{figure}

\subsection{Outline of the paper.} The rest of the article is organised as follows. In Section~\ref{sec:equilibrium} we recall some general variational arguments for the solution of the constrained equilibrium problem. 
Section~\ref{sec:log-gas} contains the proof of Theorem~\ref{thm:EM_log-gas}  and Theorem~\ref{thm:log-gas2} for the log-gas. In Section~\ref{sec:Yukawa} we present the proof of Theorem~\ref{thm:EM_Yukawa} (equilibrium problem for Yukawa interaction) and Theorem~\ref{thm:Yukawa}  (universality of the jump in the third derivative of the free energy). Finally, the Appendices~\ref{app:Yukawa} and~\ref{app:lemmas} contain the proof of some technical lemmas.

\section{Variational approach to the constrained equilibrium problem}
\label{sec:equilibrium}
We resort to a variational argument to derive necessary and sufficient conditions for $\rho_R$ to be the minimiser of the energy functional $\EE$ over $\mathcal{P}(B_R)$.  (These arguments are not new at all. They appear in many different forms and specialisations in the literature, see, e.g.,~\cite{Bernoff11,Deift99,Levy15}.) 
\par
Denote by $\rho_{R}\in\mathcal{P}(B_R)$ a local equilibrium and let
\be
\rho=\rho_{R}+\sigma\in\mathcal{P}(B_R)\ .
\ee
Here $\sigma$ is a (small) perturbation of zero mass. Of course, the perturbation $\sigma$ must be nonnegative on $(\operatorname{supp}\rho_R)^c$, the complement of $\operatorname{supp}\rho_R$. 
\par
The functional $\EE$ is quadratic in $\rho$, hence
\be
\EE[\rho]=\EE[\rho_{R}]+\EE_1[\rho_{R},\sigma]+\EE_2[\sigma,\sigma]\ ,
\ee
where $\EE_1$ and $\EE_2$ are the first and second variations, respectively. They are given explicitly by
\begin{align}
\EE_1[\rho_{R},\sigma]&=\int\left(\int\Phi(x-y)\de\rho_{R}(y)+V(x)\right)\de\sigma(x)\ ,
\label{eq:1st_var}\\
\EE_2[\sigma,\sigma]&=\frac{1}{2}\iint\Phi(x-y)\de\sigma(y)\de\sigma(x)\ .
\label{eq:2nd_var}
\end{align}
A sufficient condition for $\rho_{R}$ to be the global minimiser in $\mathcal{P}(B_R)$ is that $\EE_1[\rho_{R},\sigma]\geq0$ and $\EE_2[\sigma,\sigma]>0$ for all perturbations $\sigma$. 
\par
Consider the first variation~\eqref{eq:1st_var} and suppose that $\sigma$ varies among the perturbations whose support lies in $\operatorname{supp}\rho_R$. Because $\sigma$ is arbitrary and zero mass, for the first variation $\EE_1$ to vanish, it must be true that
\be
\int\Phi(x-y)\de\rho_{R}(y)+V(x)=\mu(R),\quad \text{for $x\in\operatorname{supp}\rho_R$}\ ,
\label{eq:E-L1}
\ee
for some constant $\mu(R)$. Consider now perturbations $\sigma$  with support in $B_R$. Remembering that $\sigma\geq0$ in $(\operatorname{supp}\rho_R)^c$, we see that a sufficient condition for $\EE_1\geq0$ is that 
\be
\int\Phi(x-y)\de\rho_{R}(y)+V(x)\geq\mu(R),\quad \text{for $x\in B_R\setminus\operatorname{supp}\rho_R$}\ .
\label{eq:E-L2}
\ee
The conditions~\eqref{eq:E-L1}-\eqref{eq:E-L2} are known as Euler-Lagrange (E-L) conditions and can be summarised by saying that if $\rho_R$ is a minimiser of $\EE$ in $\mathcal{P}(B_R)$ then there exists a constant $\mu(R)\in\R$ such that
\be
\begin{cases}
\displaystyle\int\Phi(x-y)\de\rho_R(y)=\mu(R)-V(x)&\text{in $\operatorname{supp}\rho_R$}\ ,\\
\displaystyle\int\Phi(x-y)\de\rho_R(y)\geq\mu(R)-V(x)&\text{in $B_R$}\ .
\end{cases}
\label{eq:E-L}
\ee
The constant $\mu(R)$ is called \emph{chemical potential}. 
In general, the E-L conditions provide the saddle-point(s) of the energy functional. To show that $\rho_R$ is actually the minimiser it remains to check that the $\EE$ is strictly convex, i.e. that the second variation $\EE_2>0$. Denote by $\widehat{\sigma}$ the Fourier transform of $\sigma$. Then,
\begin{align*}
\EE_2[\sigma,\sigma]&=\frac{1}{2}\iint\Phi(x-y)\de\sigma(y)\de\sigma(x)\\
&=\frac{1}{2}\iint\widehat{\Phi}(k)|\widehat{\sigma}(k)|^2\de k\ .
\end{align*}
For the class of interactions considered in this paper,  $\widehat{\Phi}(k)>0$ (see Appendix~\ref{app:Yukawa}), and this ensures that $\EE_2[\sigma,\sigma]>0$.
Therefore, for all $R>0$, the solution of the E-L conditions is the unique minimiser of the energy functional $\EE$ in $\mathcal{P}(B_R)$.

\section{Proof of Theorems~\ref{thm:EM_log-gas} and~\ref{thm:log-gas2} }
\label{sec:log-gas}
Part i) of Theorem~\ref{thm:EM_log-gas} is a classical result in potential theory, see~\cite{Deift99}. There are two cases:
\begin{itemize}
\item[-] if the walls are not active (pulled phase) the density is supported on $\operatorname{supp}\rho_{\Rs}=[-\Rs,\Rs]$ with $\Rs$ solution of
\be
\frac{1}{\pi}\int_{-\Rs}^{+\Rs}\frac{V'(x)}{\sqrt{\Rs^2-x^2}}\de x=1\ ,
\label{eq:edge_log-gas}
\ee
and the density is given by Tricomi's formula~\cite{Tricomi}
\be
\rho_{\Rs}(x)=\frac{1}{\pi\sqrt{\Rs^2-x^2}}\left(1-\fint_{-\Rs}^{+\Rs}\frac{1}{\pi}\frac{\sqrt{\Rs^2-t^2}V'(t)}{x-t}\de t\right)\ ,
\label{eq:Tricomi}
\ee
where $\fint$ denotes Cauchy's principal value.
\item[-] if the walls are active (pushed phase) the density is supported on $\operatorname{supp}\rho_{R}=[-R,R]$ and is given by~\eqref{eq:Tricomi} with the replacement $\Rs\mapsto R$.
\end{itemize}
The new content of the theorem is part ii). To prove it, we expand the potential $V$ and the regular part of the density into Chebyschev polynomials

\ben
V(Ru)=\sum_{n\geq 0}c_n(R)T_n(u)\ ,\qquad
P_R(Ru)=\sum_{n\geq 0}a_n(R)T_n(u)\ ,
\een
where
\ben
a_n(R)=\frac{1}{h_n}\int_{-1}^1  \frac{P_R(Ru)T_n(u)}{\sqrt{1-u^2}}\de u\ ,\qquad 
c_n(R)=\frac{1}{h_n}\int_{-1}^1  \frac{V(Ru)T_n(u)}{\sqrt{1-u^2}}\de u\ .
\een
A priori, the above expansions are in $L^2([-1,1])$. 
In fact, $V\in C^3$ implies that $c_n(R)=\mathrm{O}(n^{-3})$ so that the series $\sum_{n\geq 0}c_n(R)T_n(u)$ and its derivative are pointwise convergent almost everywhere to $V$ and $V'$, respectively.
We will see in the course of the proof that the  absolute convergence of $\sum_n nc_n(R)$ implies the pointwise convergence of $\sum_na_n(R)T_n(u)$, too. Note also that $c_n=0$ if $n$ is odd (the potential $V(x)$ is symmetric by assumptions). To proceed we use the following identity.
\begin{lemma}
\label{lem:Chebyshevprime}
Let $n\geq0$ be an even integer. Then,
 \begin{equation}
 uT_{n}'(u)=nT_0(u)+nT_n(u)+2n\left(T_2(u)+T_4(u)+\cdots+T_{n-2}(u)\right)\ .
 \label{eq:PPformula}
\end{equation}
\end{lemma}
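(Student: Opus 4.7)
The plan is to prove the identity by switching to the trigonometric parametrisation of the Chebyshev polynomials, where the algebra becomes transparent. Set $u=\cos\theta$ with $\theta\in[0,\pi]$, so that $T_k(u)=\cos(k\theta)$ for every $k\geq0$. Differentiating the relation $T_n(\cos\theta)=\cos(n\theta)$ and using the chain rule gives
\begin{equation*}
T_n'(u)=\frac{n\sin(n\theta)}{\sin\theta}.
\end{equation*}
Multiplying by $u=\cos\theta$ and applying the product-to-sum identity $2\cos\theta\sin(n\theta)=\sin((n+1)\theta)+\sin((n-1)\theta)$ yields
\begin{equation*}
uT_n'(u)=\frac{n}{2}\,\frac{\sin((n+1)\theta)+\sin((n-1)\theta)}{\sin\theta}.
\end{equation*}

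Next, I would rewrite each quotient as a finite sum of cosines via the standard Dirichlet-type identity, which is easily proved by induction or by summing a geometric series of complex exponentials: for any integer $m\geq1$ one has
\begin{equation*}
\frac{\sin(m\theta)}{\sin\theta}=\sum_{j=0}^{m-1}\cos\!\bigl((m-1-2j)\theta\bigr).
\end{equation*}
Applied with $m=n+1$ and $m=n-1$ (and using $n$ even, so the parities work out) this expresses $uT_n'(u)$ as a finite linear combination of the $\cos(k\theta)=T_k(u)$ for $k\in\{0,2,\dots,n\}$. Collecting coefficients, the terms $T_2,T_4,\dots,T_{n-2}$ appear twice (once from each summand, so with weight $2n/2+2n/2=2n$ after restoring the $n/2$ prefactor), while $T_0$ and $T_n$ appear only once each and pick up the coefficient $n$, which is exactly \eqref{eq:PPformula}.

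An equivalent and slightly more algebraic route would be to invoke the classical identity $T_n'(u)=nU_{n-1}(u)$ relating the Chebyshev polynomials of the first and second kind, combined with the product rule $2uU_{n-1}(u)=U_n(u)+U_{n-2}(u)$ and the expansion $U_{2k}(u)=T_0(u)+2\sum_{j=1}^{k}T_{2j}(u)$ valid for even index; substituting these in gives the same conclusion after telescoping.

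No step presents a genuine obstacle — the statement is a purely algebraic identity between orthogonal polynomials — but the mildly delicate point is bookkeeping of the coefficients when the two sums $\sin((n+1)\theta)/\sin\theta$ and $\sin((n-1)\theta)/\sin\theta$ are added, because most frequencies $\cos(k\theta)$ with $k\in\{2,4,\dots,n-2\}$ appear in both sums and must be combined correctly, while the extreme frequencies $k=0$ and $k=n$ appear in only one of them. The parity hypothesis that $n$ is even is essential precisely here: it ensures that both Dirichlet sums contribute only even-indexed cosines, so no odd-indexed $T_k$ enters on the right-hand side.
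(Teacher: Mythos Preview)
Your argument is correct. The trigonometric substitution $u=\cos\theta$ together with the Dirichlet-kernel expansion
\[
\frac{\sin(m\theta)}{\sin\theta}=\sum_{j=0}^{m-1}\cos\bigl((m-1-2j)\theta\bigr)
\]
applied to $m=n\pm1$ does the job, and the alternative route via $T_n'=nU_{n-1}$ and the $U$-to-$T$ expansion is equally valid.

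One small point of bookkeeping: your verbal accounting (``$T_0$ and $T_n$ appear only once each'') is not quite accurate. In fact $T_0$ appears with coefficient~$1$ in \emph{each} of the two Dirichlet sums, while $T_n$ appears with coefficient~$2$ in the $m=n+1$ sum only; both therefore carry total weight~$2$ before the $n/2$ prefactor, yielding the coefficient~$n$. Likewise each of $T_2,\dots,T_{n-2}$ carries weight~$2$ in \emph{each} sum, for a total of~$4$ and hence final coefficient~$2n$. The conclusion is the same, but the description should be tightened.

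As for comparison with the paper: the paper states Lemma~\ref{lem:Chebyshevprime} without proof, treating it as an elementary identity and omitting it from the appendix of lemma proofs. So there is nothing to compare against; your write-up supplies what the paper left to the reader.
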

We first express the equation~\eqref{eq:edge_log-gas} for the critical radius $\Rs$ in terms of the $c_n$'s. After the change of variable $x=\Rs u$, \eqref{eq:edge_log-gas} becomes
\begin{align*}
1=\frac{\Rs}{\pi}\int_{-1}^1uV'(\Rs u)\frac{\de u}{\sqrt{1-u^2}}=\frac{1}{\pi}\int_{-1}^1\sum_{n\geq0}c_n(\Rs)uT'_n(u)\frac{\de u}{\sqrt{1-u^2}}=\sum_{n\geq0}nc_n(\Rs)\ ,
\end{align*}
where we used Lemma~\ref{lem:Chebyshevprime} and the orthogonality relation~\eqref{eq:Cheby_orth}. Note that $nc_n(\Rs)=\mathrm{O}(n^{-2})$ and hence the series is absolutely convergent. This proves that $\Rs$ is the solution of~\eqref{eq:radius_log}.
\par
The Chebyshev polynomials satisfy the following electrostatic formula (for a proof see Appendix~\ref{app:lemmas}).
\begin{lemma}[Chebyshev electrostatic formula]
\label{lem:electrostatic_log-gas} Let $x\in\R$. Then
\be
-\int_{-1}^1\log |x-y|\frac{T_n(y)}{\pi\sqrt{1-y^2}}\de y=
\begin{cases}
\delta_{n,0}\log2+(1-\delta_{n,0})\dfrac{1}{n}T_n(x)&|x|\leq 1\\\\
\dfrac{1}{n}e^{-nz}\quad (\text{with $x=\cosh z$})&|x|\geq 1\ .
\end{cases}
\label{eq:electrostatic_log-gas}
\ee

\end{lemma}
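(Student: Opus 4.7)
The plan is to reduce both cases to a single Fourier-coefficient computation via the substitution $y=\cos\phi$, $\phi\in(0,\pi)$, under which $T_n(y)=\cos(n\phi)$ and $\de y/\sqrt{1-y^2}=-\de\phi$. The lemma then amounts to computing the cosine-Fourier coefficients on $(0,\pi)$ of the map $\phi\mapsto-\log|x-\cos\phi|$, the two cases being distinguished by whether $x\in[-1,1]$ or $|x|\ge 1$.

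For the bulk case $|x|\le 1$ I would write $x=\cos\theta$ and use the factorisation $\cos\theta-\cos\phi=-2\sin(\tfrac{\theta+\phi}{2})\sin(\tfrac{\theta-\phi}{2})$ together with the classical log-sine Fourier series $\log|2\sin(\alpha/2)|=-\sum_{k\ge1}\cos(k\alpha)/k$ for $\alpha\in\R\setminus 2\pi\Z$, which is itself the real part of the Taylor expansion of $-\log(1-e^{i\alpha})$. Applying this to $\alpha=\theta\pm\phi$ and using $\cos k(\theta+\phi)+\cos k(\theta-\phi)=2\cos(k\theta)\cos(k\phi)$ yields
$$-\log|\cos\theta-\cos\phi|=\log 2+\sum_{k\ge 1}\frac{2}{k}\cos(k\theta)\cos(k\phi),$$
which is exactly the multipole expansion used elsewhere in the paper, so the same argument proves both identities. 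Integrating termwise against $\cos(n\phi)/\pi$ and invoking orthogonality of cosines on $(0,\pi)$ gives $\log 2$ for $n=0$ and $T_n(x)/n$ for $n\ge 1$.

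For the exterior case $|x|\ge 1$, I set $x=\cosh z$ with $z\ge 0$ and exploit the factorisation $\cosh z-\cos\phi=2\sinh(\tfrac{z+i\phi}{2})\sinh(\tfrac{z-i\phi}{2})$. Writing $2\sinh(w/2)=e^{w/2}(1-e^{-w})$, taking logarithms, and expanding via $-\log(1-w)=\sum_{k\ge 1}w^k/k$ (legitimate since $|e^{-z\pm i\phi}|=e^{-z}\le 1$, with strict inequality for $z>0$) and taking real parts gives
$$\log(\cosh z-\cos\phi)=z-\log 2-\sum_{k\ge 1}\frac{2e^{-kz}}{k}\cos(k\phi).$$
Termwise integration against $\cos(n\phi)/\pi$ then produces $e^{-nz}/n$ for $n\ge 1$, which is the claimed formula. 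Here the series converges absolutely and uniformly in $\phi$ for $z>0$, so termwise integration needs no justification; the boundary value $z=0$ is recovered by continuity or absorbed into the bulk computation.

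The only genuine technical point is the termwise integration in the bulk case, since $\log|\cos\theta-\cos\phi|$ is singular at $\phi=\pm\theta$ and the partial sums do not converge uniformly. This is handled cleanly by an $L^2$ argument: the Fourier coefficients $2\cos(k\theta)/k$ lie in $\ell^2$ and $\log|\cos\theta-\cos\phi|\in L^2((0,\pi),\de\phi)$, so Parseval identifies the coefficients unambiguously. An alternative route is to note that the exterior identity is proved without any such subtlety, and then to recover the bulk case by letting $|x|\downarrow 1$ and using the uniform integrability of $\log|x-y|$ against the arcsine density on $[-1,1]$. Either route completes the proof.
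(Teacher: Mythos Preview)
Your proof is correct. For the interior case $|x|\le 1$ your argument coincides with the paper's: both reduce to the multipole expansion $-\log|x-y|=\log 2+\sum_{n\ge1}\frac{2}{n}T_n(x)T_n(y)$ followed by orthogonality, and your derivation of that expansion via the log-sine series is exactly the computation the paper gives in its appendix. Your remark on the $L^2$ justification of termwise integration is a welcome addition that the paper leaves implicit.

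For the exterior case $|x|\ge 1$ your route is genuinely different from the paper's and considerably shorter. The paper proceeds by a generating-function argument: it sets $x=(1+t^2)/(2t)$, forms $\sum_{n\ge1}I_n(x)z^n$, brings in the Chebyshev generating function $\sum_{n\ge0}T_n(y)z^n=(1-zy)/(1-2zy+z^2)$, differentiates in $t$, evaluates the resulting rational integral by partial fractions, and integrates back. Your approach instead writes $\cosh z-\cos\phi=\tfrac12 e^{z}\,|1-e^{-z+i\phi}|^2$ and reads off the Fourier expansion directly from the Taylor series of $\log(1-w)$, which converges absolutely for $z>0$. This is more elementary, treats the interior and exterior cases in parallel, and makes the appearance of $e^{-nz}$ transparent as the analytic continuation of $T_n(\cos\theta)=\cos(n\theta)$ to $\theta=iz$. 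The paper's method, on the other hand, showcases the Chebyshev generating function and avoids complex factorisations; it is the natural choice if one is already working in the Chebyshev basis throughout.
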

From the E-L equation, an application of the Chebyshev electrostatic formula~\eqref{eq:electrostatic_log-gas} gives 
\begin{align*}
-&\int_{-R}^{R}\log|x-y|\rho_R(y)\de y+V(x)\\
&=-\log\frac{R}{2}+a_0(R)c_0(R)+\sum_{n\geq 1}\left(\frac{1}{n}a_n(R)+c_n(R)\right)T_n\left(\frac{x}{R}\right)=\mu(R)\quad \text{if $|x|\leq R$}\ .
\end{align*}
This equation  and the normalisation of $\rho_R(x)$ imply that
\begin{equation}
a_0(R)=1,\qquad \frac{1}{n}a_n(R)+c_n(R)=0\quad\forall n\geq 1\ . 
\label{eq:a_c}
\end{equation}
In particular, we have an explicit formula for the chemical potential
\begin{equation}
\mu(R)=-\log\frac{R}{2}+c_0(R)=-\log\frac{R}{2}+\int_{-R}^R \frac{V(x)}{\pi\sqrt{R^2-x^2}}\de x\ .
\label{eq:mu_log-gas}
\end{equation}
Equation~\eqref{eq:a_c} shows that 
\be
P_R(Ru)=1-\sum_{n\geq1}nc_n(R)T_n(u)\ .
\ee
The sequence $nc_n(R)$ is $\mathrm{O}(n^{-2})$ and hence the series is pointwise convergent almost everywhere. This concludes the proof of Theorem~\ref{thm:EM_log-gas}.

\par 
To prove Theorem~\ref{thm:log-gas2} we begin by computing $F(R)$:
\begin{align}
\nonumber F(R) &=\frac{1}{2}\left(\mu(R)+\int_{-R}^R V(x)\rho_R(x)\de x\right)\nonumber\\
\nonumber &=\frac{1}{2}\left(\mu(R)+\int_{-1}^1 V(Ru)\frac{P_R(Ru)}{\pi\sqrt{1-u^2}}\de u\right)\nonumber\\
\nonumber &=\frac{1}{2}\left(\mu(R)+\sum_{n,m\geq 0}\int_{-1}^1 c_n(R) a_m(R)\frac{T_n(u)T_m(u)}{\pi\sqrt{1-u^2}}\de u\right)\nonumber\\
\nonumber &=\frac{1}{2}\left(\mu(R)+ c_0(R)a_0(R)+\frac{1}{2}\sum_{n\geq 1}c_n(R)a_n(R)\right)\nonumber\\
&= \frac{1}{2}\left(-\log\frac{R}{2}+2c_0(R)-\sum_{n\geq 1}\frac{n c_n^2(R)}{2}\right)\ .
\end{align}
(The first identity follows from the definition of $F(R)$ as energy  difference~\eqref{eq:EF_log}, and the E-L condition; then we expanded in Chebyshev polynomials and used their orthogonality relation; the last equality follows from~\eqref{eq:mu_log-gas} and~\eqref{eq:a_c}.) 
Therefore
\begin{align}
F'(R)&=-\frac{1}{2R}\left(1-2 Rc_0'(R)+R\sum_{n\geq 1}nc_n(R)c_n'(R)\right)\ .
\label{eq:F_formula}
\end{align}
We want to prove that the above expression is equal to $-P_R(R)^2/(2R)$. First, notice that 
$P_R(R)$ is 
$$
P_R(R)=\sum_{n\geq0}a_n(R)T_n(1)=\sum_{n\geq0}a_n(R)=1-\sum_{n\geq1}nc_n(R)\ ,
$$
so that 
\begin{equation}
-\frac{P_R(R)^2}{2R}=
-\frac{1}{2R}\left(1-\sum_{n\geq 1}nc_n(R)\right)^2\ .
\end{equation}
Comparing with~\eqref{eq:F_formula}, the identity to show to complete the proof is
\begin{equation}
1-2 Rc_0'(R)+R\sum_{n\geq 1}nc_n(R)c_n'(R)\stackrel{?}{=}\left(1-\sum_{n\geq 1}nc_n(R)\right)^2.
\label{eq:identity}
\end{equation}
Using Lemma~\eqref{lem:Chebyshevprime} and the identity $u\partial V(Ru)/\partial u=R\partial V(Ru)/\partial R$, we have
\ben
 c'_n(R)=
 \begin{cases}
\displaystyle \sum_{m\geq1} \dfrac{mc_m(R)}{R}&\text{if $n=0$}\\
\displaystyle\sum_{m\geq n} \dfrac{2mc_m(R)}{R}-\frac{nc_n(R)}{R}&\text{if $n>0$}\ .\\
 \end{cases}
\een
 Therefore we find
 \begin{align*}
 &1-2 Rc_0'(R)+R\sum_{n\geq 1}nc_n(R)c_n'(R)\\
 &=1-2 \sum_{m\geq1} mc_m(R)+\sum_{n\geq 1}nc_n(R)\left(\sum_{m\geq n} 2mc_m(R)-nc_n(R)\right)\\
 &=1-2 \sum_{m\geq1} mc_m(R)+\sum_{m,n\geq 1}nc_n(R)mc_m(R)\\
 &=\left(1-\sum_{n\geq1} nc_n(R)\right)^2\ .
 \end{align*}
 This concludes the proof of the integral formula~\eqref{eq:simple_log}. 
 \par 
 We proceed to prove that $F(R)$ has a jump in the third derivative. First, note that $F(R)$ is identically zero for $R\geq \Rs$, while $F(R)\geq0$ for $R\leq \Rs$. From~\eqref{eq:simple_log} and the fact that $P_{\Rs}(\Rs)=0$, one sees that
 \begin{align}
\lim_{R\uparrow  R_{\star}}F(R)&=\frac{1}{2} \int_{R_{\star}}^{R_{\star}} \frac{P_r(r)^2}{r}\de r=0\ ,\\
\lim_{R\uparrow  R_{\star}}F'(R)&=-\frac{1}{2}\frac{P_{\Rs}(\Rs)^2}{R_{\star}}=0\ ,\\
\lim_{R\uparrow  R_{\star}}F''(R)&=-\frac{2P_{\Rs}(\Rs)P_{\Rs}'(\Rs)R_{\star}-P_{\Rs}(\Rs)^2}{2R_{\star}^{2}}=0\ .
\end{align}
On the other hand, 
\begin{align}
\lim_{R\uparrow  R_{\star}}F'''(R)&=-\frac{P_{\Rs}'(R_{\star})^2}{R_{\star}}<0\ .
\label{eq:3rd_log-gas_der}
\end{align}
Indeed, by Assumption~\ref{ass:potential_gen} on the potential $V(x)$, it is easy to check that $P_{\Rs}'(\Rs)<0$ strictly. \qed.

\section{Proof of Theorems~\ref{thm:EM_Yukawa} and~\ref{thm:Yukawa} }
\label{sec:Yukawa}
A systematic analysis of the equilibrium problem for the screened Coulomb interaction does not seem to have appeared in the existing literature. Here we solve the problem (Theorem~\ref{thm:EM_Yukawa}). In this Section we put forward a sensible ansatz for $\rho_R$  depending on two parameters (chemical potential $\mu$ and surface charge $c$), that we then prove to be the minimiser by imposing the E-L conditions that fix $\mu=\mu(R)$ and $c=c(R)$. The only thing that needs to be checked is the positivity of the candidate solution  (Remark~\ref{rmk:positivity}). In solving the problem, we will make the most of its spherical symmetry. A key technical ingredient in the derivation will be a shell integration lemma (Lemma~\ref{lem:shell_thm_Yukawa}), the analogue of Newton's formula~\eqref{eq:Newton} for a Yukawa interaction in generic dimension $d\geq1$. 
\subsection{General form of the constrained minimisers}
The usual strategy in these minimisation problems is to look for a candidate solution of the E-L conditions~\eqref{eq:E-L}. 
The condition $\EE_2>0$ guarantees that the saddle-point is the minimiser in $\mathcal{P}(B_R)$. 
\par
In absence of volume constraints, i.e. $B_R=\R^d$, the global minimiser is supported on a ball of radius $R_{\star}$. For $R>R_{\star}$, the density of the gas does not feel the hard walls and $\rho_R=\rho_{R_{\star}}$. We anticipate here that for $R<R_{\star}$, $\operatorname{supp}\rho_R=B_R$ (thus explaining the name 'pushed phase') so that the second E-L condition~\eqref{eq:E-L} is immaterial.
A first idea to find $\rho_R$ is to use the property $\mathrm{D}\Phi_d(x)=\Omega_d\delta(x)$. Applying $\mathrm{D}$ to both sides of the first E-L condition we \emph{formally} get
\be
\de\rho_R(x)\stackrel{?}{=}\frac{1}{\Omega_d}\mathrm{D}\bigl(\mu(R)-V(x)\bigr)\mathbbm{1}_{|x|\in R\wedge R_{\star}}\de x\ .
\label{eq:ansatz1}
\ee
However the above ansatz is, in general, incorrect. In particular, the equality~\eqref{eq:ansatz1} is not true if $\rho_R$ contains singular components.
\par
One can prove the absence, at equilibrium, of condensation of particles in the bulk (absence of $\delta$-components). To see this, one compares the energy of a density containing a $\delta$-function in the bulk to one where the $\delta$-function has been replaced by a narrow, symmetric mollification  $\delta_{\epsilon}$ (see~\cite[Section 3.2.1]{Bernoff11} for details). This argument fails on the boundary where it is not possible to consider symmetric mollifications $\delta_{\epsilon}$ contained in the support.

As a matter of fact, condensation  of particles, though not possible in the bulk, can and do occur on the boundary of the support! For Coulomb gases ($\mathrm{D}=-\Delta$) this statement is the well-known fact that, at electrostatic equilibrium, any excess of charge must be distributed on the surface of a conductor. In a rotationally symmetric problem, any accumulation of charge must be uniformly distributed on the `surface', i.e. the boundary of the support of $\rho_R$. Using the same argument, one can argue that in the unconstrained problem -- pulled phase -- the accumulation of charge on the surface is not possible (otherwise it would be possible to mollify the $\delta$-components on the surface, lowering the energy).
Therefore, a more appropriate ansatz is
 \be
\de\rho_R(x)=\frac{1}{\Omega_d} \mathrm{D}\bigl(\mu(R)-V(x)\bigr)\mathbbm{1}_{|x|\in R\wedge R_{\star}}\de x
+ c(R)\,\frac{\delta(|x|-R)}{\Omega_d R^{d-1}}
\ ,
\label{eq:ansatz2}
\ee
for some constants $\mu(R)$ and $c(R)$ to be determined. In the following we will show that, for all $R$, there exists a unique choice of $\mu(R)$ and $c(R)$ such that~\eqref{eq:ansatz2} is a saddle-point.

\begin{remark} 
\label{rmk:positivity}
In spherical coordinates $x=r \omega$, $r = |x|\geq0$, $\omega = x/|x|\in S^{d-1}$, Eq.~\eqref{eq:ansatz2}
 reads
\begin{align}
\de\rho_R(r,\omega)=\frac{1}{\Omega_d}\Bigl[&f(r)\mathbbm{1}_{r\leq R\wedge R_{\star}}+\frac{c(R)}{R^{d-1}}\delta(r-R)\Bigr]\de r\de\omega\ ,
\label{eq:rhoR_Yukawa}
\end{align}
where we split the contribution on the surface $c(R)$ and the density $f(r)$ in the bulk
\be
f(r)=m^2(\mu(R)-v(r))r^{d-1}+a^2(r^{d-1}v'(r))'\ .
\label{eq:bulk}
\ee
\par
The equilibrium measure $\rho_R$ is continuous in the bulk $|x|<R$, and contains, in general, a singular component on the surface $|x|=R$.  
From the explicit formulae~\eqref{eq:c_Yuk}-\eqref{eq:mu_Yuk}  for $c(R)$ and $\mu(R)$, it is not yet obvious to see that $\rho_R$ is a positive measure, nor that the critical radius $R_{\star}$ is nonzero. 
We show here that the equilibrium measure is both positive in the bulk and on the surface. 
\par
Consider first the pushed phase $R<R_{\star}$. Starting from the surface, note that $c(R)$ is continuous for $R\geq0$ and differentiable at least twice for $R>0$ and $R\neq R_{\star}$. Moreover,
\begin{align}
1-\frac{m^2R}{a^2d}\frac{\varphi_d(R)}{\varphi_d'(R)}=\frac{mR}{ad}\frac{K_{\frac{d}{2}+1}\left(\frac{mR}{a}\right)}{K_{\frac{d}{2}}\left(\frac{mR}{a}\right)}\to 1\quad\text{ as $R\to0$}\ , 
\end{align}
and is nondecreasing (for $R>0$). Since $v(r)$ and $v'(r)r^{d-1}$ are both strictly increasing, we conclude that $c(0)= 1$, and $c(R)$ is strictly decreasing. Therefore, there exists a positive radius $R_{\star}>0$ such that $c(R_{\star})=0$.
The pulled phase is characterised by the absence of condensation of charge on the surface: $c(R)=0$  for $R\geq R_{\star}$. Hence the singular component on the surface is nonnegative for all $R\geq0$.
\par
We now analyse the bulk. This amounts to showing that $f(r)$ in~\eqref{eq:bulk} is nonnegative. First, note that  $a^2(r^{d-1}v'(r))'$ is  positive by assumption. Since $v(r)$ is strictly increasing, it is enough to show that $\mu(R)-v(R)\geq0$ to imply that $\mu(R)-v(r)\geq0$ for all $r\leq R$. In the pushed phase
\be
\mu(R)-v(R)=- \frac{\varphi_d(R)}{\varphi_d'(R)}\left(v'(R)+\frac{c(R)}{a^2R^{d-1}}\right)\ .
\ee
The ratio $\frac{\varphi_d(R)}{\varphi_d'(R)}$ is negative while $c(R)$ and $v'(R)$ are both positive. Therefore, $f(r)\geq0$ for $r\leq R\wedge R_{\star}$. 
\end{remark}
Before embarking in the calculations leading to the explicit formulae for $\mu(R)$ and $c(R)$, we derive the universal formula of the free energy (Theorem~\ref{thm:Yukawa}) assuming Theorem~\ref{thm:EM_Yukawa}.
\begin{proof}[of Theorem~\ref{thm:Yukawa}]  First, note that for $R\geq R_{\star}$,
\begin{equation}
\rho_R=\rho_{R_{\star}}\quad\Rightarrow\quad F(R)=\EE[\rho_R]-\EE[\rho_{R_{\star}}]=0\ .
\label{eq:Fpulled}
\end{equation}
For $R< R_{\star}$, we compute $\EE[\rho_R]$ by inserting the explicit form of the minimiser $\rho_R$ in the functional.
This calculation is made possible by using the explicit formulae~\eqref{eq:sigmaR}--\eqref{eq:mu_Yuk} for $\rho_R$ and, crucially, a shell theorem for Yukawa interaction (see Lemma~\ref{lem:shell_thm_Yukawa} below). Eventually, we find for $R<R_{\star}$:
\begin{equation}
F(R)=\frac{1}{2} \int_{R}^{R_{\star}} \frac{c(r)^2}{a^2r^{d-1}}\de r>0\ .
\label{eq:Fpushed}
\end{equation}
Combining~\eqref{eq:Fpulled} and~\eqref{eq:Fpushed} we obtain the claimed formula~\eqref{eq:simple_Y}.
\par
To prove the jump in the third derivative, we remark again that $c(R)$ is continuous for $R\geq0$ and differentiable at least twice for $R>0$ and $R\neq R_{\star}$. Moreover, $R_{\star}>0$, and  $c(R_{\star})=0$. From the explicit formula~\eqref{eq:simple_Y}:
\begin{align}
\lim_{R\uparrow  R_{\star}}F(R)&=\frac{1}{2} \int_{R_{\star}}^{R_{\star}} \frac{c(r)^2}{a^2r^{d-1}}\de r=0\ ,\\
\lim_{R\uparrow  R_{\star}}F'(R)&=-\frac{1}{2a^2}\frac{c(R_{\star})^2}{R_{\star}^{d-1}}=0\ ,\\
\lim_{R\uparrow  R_{\star}}F''(R)&=-\frac{1}{2a^2}\left(\frac{2c(R_{\star})c'(R_{\star})R_{\star}^{d-1}-(d-1)c(R_{\star})^2R_{\star}^{d-2}}{R_{\star}^{2(d-1)}}\right)=0\ .
\end{align}
On the other hand, 
\begin{align}
\lim_{R\uparrow  R_{\star}}F'''(R)&=-\frac{1}{a^2}\frac{c'(R_{\star})^2}{R_{\star}^{d-1}}<0\ ,
\end{align}
since $c(R)$ is strictly decreasing in the pushed phase. \qed
\end{proof}

\subsection{Chemical potential and excess charge}
The chemical potential $\mu(R)$ and the excess charge $c(R)$ are fixed by the normalisation of $\rho_R$ and the E-L conditions.  

\par
Assume $R\leq R_{\star}$ (note that at this stage $R_{\star}$ is not known). We show here that in the pushed phase the chemical potential and the excess charge are solution of the following linear system
\be
\begin{cases}
& \dfrac{\varphi_d'(R)}{\varphi_d(R)}\mu(R)+\dfrac{1}{a^2R^{d-1}}c(R) =\dfrac{\varphi_d'(R)}{\varphi_d(R)}v(R)-v'(R)\ , \\\\
&\dfrac{m^2R^d}{d}\mu(R)+c(R) =1-a^2v'(R)R^{d-1}+m^2\int\limits_{0}^Rv(r)r^{d-1}\de r\ ,
\label{eq:mu_c_Yuk}
\end{cases}
\ee
from which the  expressions for $\mu(R)$ and $c(R)$ in~\eqref{eq:c_Yuk}--\eqref{eq:mu_Yuk} follow. Note that, for all $R$, 
\begin{align}
\det
\begin{pmatrix}
   \dfrac{\varphi_d'(R)}{\varphi_d(R)}      & \dfrac{1}{a^2R^{d-1}} \\\\
   \dfrac{m^2R^d}{d}     & 1
\end{pmatrix}
=-\frac{m^2R}{a^2d}\dfrac{K_{\frac{d}{2}+1}\left(\frac{mR}{a}\right)}{K_{\frac{d}{2}-1}\left(\frac{mR}{a}\right)}\neq0\ ,
\end{align}
therefore the solution of the linear system~\eqref{eq:mu_c_Yuk} is unique.
\par
The normalisation condition is
\be
\int\de\rho_R(x)=1\,\,\Rightarrow\,\, \frac{m^2R^d}{d}\mu(R)-m^2\int_0^Rv(r)r^{d-1}\de r+a^2R^{d-1}v'(R)+c(R)=1\ .\label{normcond}
\ee
\par
The E-L condition in the support of $\rho_R$ is
\be
\int\varphi_d(|x-y|)\de\rho_R(y)=\mu(R)-v(|x|),\quad\text{for $|x|\leq R$}\ .
\label{eq:E-L_partial}
\ee
At this stage, we need an analogue of the electrostatic shell theorem to perform the angular integration in~\eqref{eq:E-L_partial}. For Yukawa interaction the potential outside a uniformly `charged' sphere is the same as that generated by a point charge at the centre of the sphere with a dressed charge (dependent on the radius of the sphere); inside the spherical shell the potential is not constant. The precise statement is a `shell theorem' for screened Coulomb interaction.
\begin{lemma}[Shell integration formula] 
\label{lem:shell_thm_Yukawa}
\be
\frac{1}{\Omega_d}\int\limits_{|y|=r}\varphi_d\left(|x-y|\right)\de S(y)=\psi_d\left(\min\left\{|x|,r\right\}\right)\varphi_d\left(\max\left\{|x|,r\right\}\right)\ ,
\label{eq:shell_thm_Yukawa}
\ee
where $\de S$ is the $(d-1)$-dimensional surface measure, with
\be
\psi_d(r)=\left(\frac{2a}{m r}\right)^{\frac{d}{2}-1}\Gamma\left(\frac{d}{2}\right)I_{\frac{d}{2}-1}\left(\frac{m r}{a}\right)\ .
\ee
($I_{\nu}$ denotes the modified Bessel function of the first kind.) 
\begin{proof}
See Appendix~\ref{app:lemmas}.
\qed
\end{proof}
\end{lemma}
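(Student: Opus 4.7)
My plan is to identify the left-hand side, which I denote by $U(x)$, as the unique radial solution of a modified Helmholtz equation whose source is concentrated on the sphere $|y|=r$, and then read off the product formula by matching across the shell. First, by rotation invariance of the integrand under simultaneous rotations of $x$ and of the sphere $\{|y|=r\}$, the function $U(x)$ depends only on $s:=|x|$. Next, applying the defining distributional identity $(-a^2\Delta+m^2)\varphi_d(|\,\cdot\,|)=\Omega_d\delta$ under the integral sign yields
\begin{equation*}
(-a^2\Delta_x+m^2)\,U(x)=\int_{|y|=r}\delta(x-y)\,\de S(y),
\end{equation*}
i.e.\ the (unnormalised) surface measure on the sphere $|x|=r$. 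In particular $U$ is a radial solution of the \emph{homogeneous} modified Helmholtz equation on each of the two regions $\{|x|<r\}$ and $\{|x|>r\}$, and is continuous across $|x|=r$.

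Radial solutions of $(-a^2\Delta+m^2)u=0$ satisfy a modified Bessel equation of order $\nu=d/2-1$; up to scalars they are spanned by $\psi_d$ (regular at the origin, growing at infinity for $m>0$) and $\varphi_d$ (singular at the origin, exponentially decaying at infinity), as one reads off the definitions recalled in Appendix~\ref{app:Yukawa}. Since $U$ is bounded at the origin (the integrand is manifestly bounded near $x=0$) and decays as $|x|\to\infty$ (inherited from the decay of $\varphi_d$ on the outer region), it follows that
\begin{equation*}
U(|x|)=\begin{cases} C_1\,\psi_d(|x|), & |x|<r,\\ C_2\,\varphi_d(|x|), & |x|>r,\end{cases}
\end{equation*}
for some constants $C_1,C_2$ depending on $r$.

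To determine $C_1$ and $C_2$ I impose two matching conditions at $|x|=r$. Continuity of $U$ gives $C_1\psi_d(r)=C_2\varphi_d(r)$. A jump relation for the radial derivative is obtained by integrating $(-a^2\Delta+m^2)U$ over a thin spherical shell $r-\epsilon<|x|<r+\epsilon$, applying the divergence theorem, and letting $\epsilon\downarrow 0$; this produces a second linear equation in $C_1,C_2$ proportional to the area of the shell $\Omega_d r^{d-1}$. These two linear relations, combined with the classical Wronskian identity $W[I_\nu,K_\nu](z)=-1/z$, which translates directly into the Wronskian of $\psi_d$ and $\varphi_d$, pin down both constants. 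The outcome is $C_1=\varphi_d(r)$ and $C_2=\psi_d(r)$, and gluing the two regions gives the claimed $U(|x|)=\psi_d(\min\{|x|,r\})\varphi_d(\max\{|x|,r\})$.

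The main obstacle I expect is the bookkeeping of constants: the normalisations embedded in the definitions of $\varphi_d$ and $\psi_d$ must conspire, through $W[\psi_d,\varphi_d]$, to yield exactly the right coefficients in the matching step, rather than constants off by factors of $r^{d-1}$, $m$, or $a$. This reduces to routine manipulations of modified Bessel identities (in particular the identity $I_\nu(z)K_{\nu+1}(z)+I_{\nu+1}(z)K_\nu(z)=1/z$), but it is the step that requires the most care.
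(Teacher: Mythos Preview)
Your approach is correct and takes a genuinely different route from the paper's. The paper proceeds by direct computation: it expresses $\varphi_d$ via the heat-kernel integral representation $\Phi_d(x)=\frac{\Omega_d}{(4\pi a^2)^{d/2}}\int_0^\infty t^{-d/2}e^{-m^2t-|x|^2/(4a^2t)}\,\de t$, expands $|x-y|^2=|x|^2+r^2-2x\cdot y$, performs the angular integral (which produces $I_{d/2-1}$ through its integral representation), and then applies a tabulated identity (a version of Gradshteyn--Ryzhik 6.635.3) of the form $\int_0^\infty z^{-1}I_\nu(z)\exp\bigl(-\eta\xi/(2z)-\tfrac{z}{2}(\eta/\xi+\xi/\eta)\bigr)\,\de z=2K_\nu(\max)\,I_\nu(\min)$ to split the result into the product $\psi_d\cdot\varphi_d$.

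Your PDE-and-matching argument is more conceptual: it exploits only that $\varphi_d$ is the fundamental solution of $-a^2\Delta+m^2$, identifies the two admissible radial homogeneous solutions by their behaviour at $0$ and $\infty$, and fixes the constants via continuity and the derivative jump. What is pleasant is that the Wronskian identity you need, $\varphi_d'\psi_d-\varphi_d\psi_d'=-\tfrac{1}{a^2 r^{d-1}}$, is precisely the lemma the paper states and proves immediately \emph{after} the shell formula (their Eq.~\eqref{eq:Abel}); so your method derives the shell theorem from an identity the paper needs anyway, and avoids the special integral. One small caution on the bookkeeping you flag: the paper's $\de S$ is effectively the angular measure (total mass $\Omega_d$, not $\Omega_d r^{d-1}$), as can be seen from both the appendix proof and the way the formula is applied in Section~\ref{sec:Yukawa}; with that reading the jump condition gives exactly $U'(r^+)-U'(r^-)=-\tfrac{1}{a^2 r^{d-1}}$ and hence $C_1=\varphi_d(r)$, $C_2=\psi_d(r)$ as you claim.
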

\begin{remark} The integration formula~\eqref{eq:shell_thm_Yukawa} has already appeared in disguised form in the literature, see~\cite[Theorem 6]{Duffin71} or~\cite[Theorem 4.2]{Rasila16}. In dimension $d=1,2$, and $3$:
\be
\psi_1(r)=\cosh(mr/a),\quad \psi_2(r)=I_0(mr/a),
\quad \psi_3(r)=\frac{\sinh(mr/a)}{mr/a}\ .
\ee
It is interesting to compare~\eqref{eq:shell_thm_Yukawa} with the classical shell theorem when $\varphi_d$ is the $d$-dim Coulomb potential
 \be
\frac{1}{\Omega_d}\int\limits_{|y|=r}\varphi_d\left(|x-y|\right) \de S(y) =\varphi_d\left(\max\left\{|x|,r\right\}\right)\ .
\label{eq:Newton}
\ee
\end{remark}
Using~\eqref{eq:shell_thm_Yukawa}, we can perform the angular integration in~\eqref{eq:E-L_partial}
\begin{align}
\varphi_d(z)\int_0^{z}\psi_d(r)f(r)\de r+\psi_d(z)\int_{z}^R\varphi_d(r)f(r)\de r +c(R)\varphi_d(R)\psi_d(z)=\mu(R)-v(z)\ ,
\end{align}
where we set $z=|x|\leq R$.

Differentiating with respect to $z$,
\begin{align}
\varphi_d'(z)\int_0^{z}\psi_d(r)f(r)\de r
+\psi_d'(z)\int_{z}^R\varphi_d(r)f(r)\de r 
+c(R)\varphi_d(R)\psi_d'(z)+v'(z)=0\ .
\label{eq:E-L_step}
\end{align}
Using integration by parts and the properties of $\psi_d$ and $\varphi_d$ we find
\begin{align}
\nonumber \int_0^z\psi_d(r)f(r)\de r&=a^2\psi_d(z)z^{d-1}v'(z)+a^2\psi_d'(z)z^{d-1}(\mu(R)-v(z))\ ,\\
\int_z^R\varphi_d(r)f(r)\de r&=a^2\varphi_d(R)R^{d-1}v'(R)+a^2\varphi_d'(R)R^{d-1}(\mu(R)-v(R))\nonumber\\
&-a^2\varphi_d(z)z^{d-1}v'(z)-a^2\varphi_d'(z)z^{d-1}(\mu(R)-v(z))\ .
\end{align}
Therefore, Eq.~\eqref{eq:E-L_step} reads
\begin{align}
\nonumber a^2\psi_d(z)z^{d-1}v'(z)+a^2\varphi_d(R)R^{d-1}v'(R)+a^2\varphi_d'(R)R^{d-1}(\mu(R)-v(R))\\-a^2\varphi_d(z)z^{d-1}v'(z)+c(R)\varphi(R)\psi_d'(z)+v'(z)=0\ ,
\label{eq:E-L_step2}
\end{align}
which can be simplified using the following lemma.
\begin{lemma} For $z\neq0$,
\be
\varphi_d'(z)\psi_d(z)-\varphi_d(z)\psi_d'(z)=-\frac{1}{a^{2}}\frac{1}{z^{d-1}}\ .
\label{eq:Abel}
\ee
\end{lemma}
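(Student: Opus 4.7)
My plan is to prove the identity by a direct computation based on the explicit Bessel representations of $\varphi_d$ and $\psi_d$, combined with the classical Wronskian identity for modified Bessel functions. Setting $\nu = d/2-1$, I will write
\begin{equation*}
\varphi_d(r) = C_1\, r^{-\nu}\, K_\nu\!\left(\tfrac{mr}{a}\right),\qquad \psi_d(r) = C_2\, r^{-\nu}\, I_\nu\!\left(\tfrac{mr}{a}\right),
\end{equation*}
where $C_1 = (m/a)^\nu /\bigl(a^2\, 2^\nu\, \Gamma(\nu+1)\bigr)$ and $C_2 = (2a/m)^\nu\, \Gamma(\nu+1)$. A quick multiplication gives the key constant $C_1 C_2 = 1/a^2$, which is precisely what will appear in the right-hand side of the claimed identity.

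Next I differentiate both factors. When the combination $\varphi_d'(z)\psi_d(z) - \varphi_d(z)\psi_d'(z)$ is expanded, the two terms arising from differentiating the common power-law prefactor $r^{-\nu}$ cancel against each other by antisymmetry, leaving only the cross terms where the derivative hits the Bessel function. The result is
\begin{equation*}
\varphi_d'(z)\psi_d(z) - \varphi_d(z)\psi_d'(z) = -\, C_1 C_2\, \frac{m}{a}\, z^{-2\nu}\, \Bigl[K_\nu\!\left(\tfrac{mz}{a}\right) I_\nu'\!\left(\tfrac{mz}{a}\right) - K_\nu'\!\left(\tfrac{mz}{a}\right) I_\nu\!\left(\tfrac{mz}{a}\right)\Bigr].
\end{equation*}
Now I invoke the standard Wronskian relation (NIST 10.28.2)
\begin{equation*}
K_\nu(x)\, I_\nu'(x) - K_\nu'(x)\, I_\nu(x) = \frac{1}{x},
\end{equation*}
applied at $x = mz/a$, to reduce the bracket to $a/(mz)$. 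Substituting $C_1 C_2 = 1/a^2$ and using $2\nu + 1 = d - 1$ then yields exactly $-\,1/(a^2 z^{d-1})$, as required.

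The computation is entirely algebraic once the Bessel Wronskian is in hand, so no step is a genuine obstacle; the only subtle point is bookkeeping of signs, of the $(m/a)$ factors coming from the chain rule, and of the exponent $2\nu + 1 = d-1$. A conceptual alternative would be to notice that, for $r>0$, both $\varphi_d$ and $\psi_d$ solve the same radial modified Helmholtz equation $-a^2\bigl(f'' + \tfrac{d-1}{r} f'\bigr) + m^2 f = 0$, so that by Abel's theorem their Wronskian is automatically of the form $C\, r^{-(d-1)}$; the constant $C = -1/a^2$ could then be fixed by inspecting the $r\downarrow 0$ asymptotics ($\psi_d(r)\to 1$ while $\varphi_d(r)$ carries the normalisation dictated by $\mathrm{D}\Phi_d = \Omega_d\delta$). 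I will keep this as a sanity check but present the direct calculation, since it makes the universal constant $1/a^2$ come out transparently.
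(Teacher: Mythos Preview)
Your proof is correct and is essentially the same approach as the paper's: both reduce the identity to the standard Wronskian relation for $I_\nu$ and $K_\nu$ (NIST 10.28.2), with your version spelling out explicitly the bookkeeping of the power-law prefactors and the constant $C_1C_2=1/a^2$ that the paper leaves implicit. The ``conceptual alternative'' you mention via Abel's identity for the radial Helmholtz equation is in fact exactly the route the paper names, so there is no substantive difference.
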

\begin{proof}
 $I_{\nu}$ and $K_{\nu}$ are solutions of the modified Bessel equation
\be
z^2\frac{\de^2w}{\de z^2}+z\frac{\de w}{\de z}-(z^2+\nu^2)w=0\ .
\ee
Equation~\eqref{eq:Abel}  is a straightforward application of Abel's identity 
to the Wronskian of $I_{\nu}$ and $K_{\nu}$, see~\cite[Eq.~10.28.2]{NIST}.
\qed
\end{proof}
Using the above lemma, Eq.~\eqref{eq:E-L_step} simplifies as
\begin{align}
a^2\varphi_d(R)R^{d-1}v'(R)\psi_d'(z)+a^2\varphi_d'(R)R^{d-1}(\mu(R)-v(R))\psi_d'(z)+c(R)\varphi(R)\psi_d'(z)=0\ .
\label{eq:E-L_step3}
\end{align}
The above equality must be true for all $z$. Therefore, 
the E-L equation can be eventually written as
\begin{equation}
\frac{c(R)}{R^{d-1}}+a^2\frac{\varphi_d'(R)}{\varphi_d(R)}(\mu(R)-v(R))+a^2v'(R)=0\ .
\end{equation}
The above relation between $\mu(R)$ and $c(R)$ is the first equation in~\eqref{eq:mu_c_Yuk}.
\par
We finally get the explicit formulae
\begin{align}
\mu(R)&=\frac{v(R)-\frac{1}{a^2R^{d-1}}\frac{\varphi_d(R)}{\varphi_d'(R)}\left(1-m^2\int_0^Rv(r)r^{d-1}\de r\right)}{1-\frac{m^2R}{a^2d}\frac{\varphi_d(R)}{\varphi_d'(R)}}\ ,
\label{eq:mu_Yukawa}\\
 c(R)&=\frac{1-\left(a^2-\frac{m^2R}{d}\frac{\varphi_d(R)}{\varphi_d'(R)}\right)v'(R)R^{d-1}-\frac{m^2R^d}{d}v(R)+m^2\int_0^Rr^{d-1}v(r)\de r}{1-\frac{m^2R}{a^2d}\frac{\varphi_d(R)}{\varphi_d'(R)}}\ .
\label{eq:c_Yukawa}
\end{align}
At this stage, what would remain to do is just checking the positivity of $\rho_R$. We did this in Remark~\ref{rmk:positivity}. This concludes the proof of the theorem. 
\begin{figure}
\centering
  \includegraphics[width=1\columnwidth]{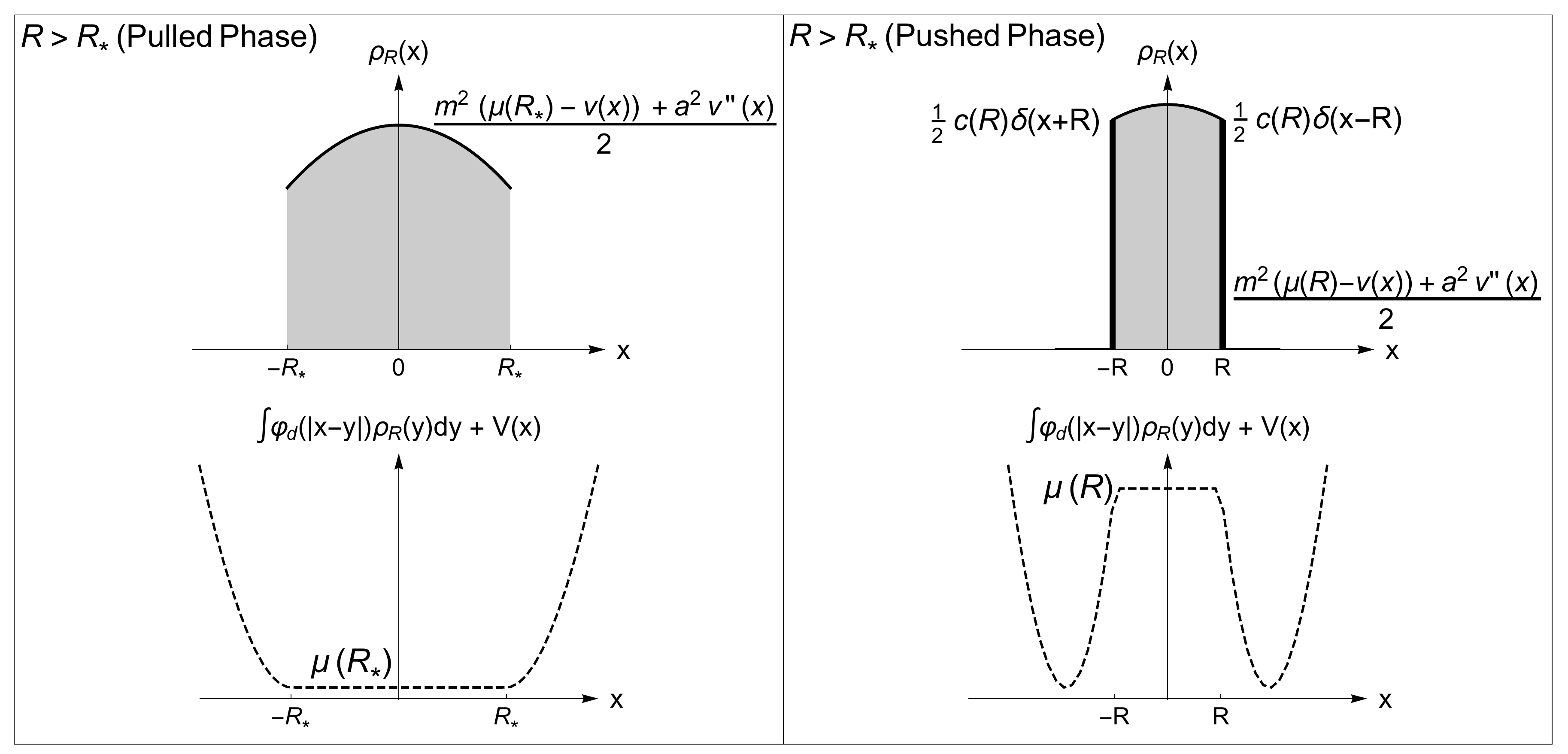}
\caption{Top: Equilibrium measure of a one-dimensional ($d=1$) Yukawa gas in the pulled and pushed phases. Here the gas is confined in a quadratic potential $v(r)=r^2/2$, and $a=m=1/2$. Bottom: energy density (on a convenient scale) corresponding to the equilibrium measures. The energy density is constant, and equal to the chemical potential $\mu$, in the support of the equilibrium measure. }
\label{fig:density_energy}       
\end{figure}

\begin{acknowledgements}
The research of FDC is supported by ERC Advanced Grant 669306.
PV acknowledges the stimulating research environment provided by the EPSRC Centre for Doctoral Training in Cross-Disciplinary Approaches to Non-Equilibrium Systems (CANES, EP/L015854/1).
ML was supported by Cohesion and Development Fund 2007--2013 - APQ Research Puglia Region ``Regional program supporting smart specialization and social and environmental sustainability - FutureInResearch.'' 
PF was partially supported by by Istituto Nazionale di Fisica Nucleare (INFN) through the project ``QUANTUM.''
FDC, PF and ML were partially supported by the Italian National Group of Mathematical Physics (GNFM-INdAM).
The authors would like to thank S. N. Majumdar and  Y. V. Fyodorov for useful discussions, and G. Schehr and Y. Levin for  remarks on the first version of the paper. We also thank E. Katzav for drawing our attention to the similarity between some of our calculations and the $J$-integral of linear elastic fracture mechanics theory.
\end{acknowledgements}
\appendix
\section{Yukawa interaction}
\label{app:Yukawa}
Writing the distributional identity~\eqref{eq:distr} in Fourier coordinates
\ben
\widehat{\Phi}_d(k)=\frac{\Omega_d}{a^2|k|^2+m^2}\quad\Rightarrow\quad
\Phi_d(x)=\frac{\Omega_d}{(2\pi)^d}\int_{\R^d}\frac{e^{-\mathrm{i}kx}}{a^2|k|^2+m^2}\de k\ .
\label{eq:kernel_Fourier}
\een
One way to invert the Fourier transform is by writing
\ben
\Phi_d(x)=\frac{\Omega_d}{(2\pi)^d}\int_{\R^d}\de k\int_0^{+\infty}\de te^{-\mathrm{i}kx}e^{-(a^2|k|^2+m^2)t}\ ,
\een
and work out first the Gaussian integral in $k$:
\ben
\Phi_d(x)=\frac{\Omega_d}{(4\pi a^2)^\frac{d}{2}}\int_0^{+\infty}\frac{e^{-m^2t-\frac{|x|^2}{4a^2t}}}{t^\frac{d}{2}}\de t\ .
\een
One recognises the integral representation of the Bessel function of the second kind~\cite[Eq.~10.32.10]{NIST} 
\ben
\Phi_d(x)=\varphi_d(|x|)\ , \qquad
\varphi_d(r)=\frac{1}{a^2 2^{\frac{d}{2}-1}}\frac{1}{\Gamma\left(\frac{d}{2}\right)}\left(\frac{m}{ar}\right)^{\frac{d}{2}-1}K_{\frac{d}{2}-1}(mr/a)\ .
\een
In dimension $d=1,2$, and $3$ we have the familiar expressions
\ben
\varphi_1(r)=\frac{e^{-mr/a}}{am},\quad \varphi_2(r)=\frac{K_0(mr/a)}{a^2}\ ,
\quad \varphi_3(r)=\frac{e^{-mr/a}}{a^2r}\ .
\een

\section{Proofs of lemmas}
\label{app:lemmas}

\begin{dimostrazione}{Eq.~\eqref{eq:formula:Cheb}}
We wish to prove the identity
\begin{equation}
-\log |x-y|=\log 2+\sum_{n\geq 1}\frac{2}{n}T_n(x)T_n(y)\qquad |x|\leq 1,|y|\leq 1,x\neq y\ .\label{identitytoprove}
\end{equation}
The identity follows from $T_n(\cos\theta)=\cos(n\theta)$. Calling $X=\arccos x$ and $Y=\arccos y$, we have to evaluate the sum $S=\sum_{n\geq 1}\frac{\cos(n X)\cos(n Y)}{n}$ on the right hand side . Using the trigonometric identity
\begin{equation}
2\cos\alpha\cos\beta=\cos(\alpha-\beta)+\cos(\alpha+\beta)\ ,
\end{equation}
we have
\begin{align}
\nonumber S &=\frac{1}{2}\left[\sum_{n\geq 1}\frac{\cos(n(X-Y))}{n}+\sum_{n\geq 1}\frac{\cos(n(X+Y))}{n}\right]=\frac{1}{2}\mathrm{Re}\left[\sum_{n\geq 1}\frac{\exp(\mathrm{i}n(X-Y))}{n}+\sum_{n\geq 1}\frac{\exp(\mathrm{i}n(X+Y))}{n}\right]\\
 &=-\frac{1}{2}\mathrm{Re}\left[\log(1-e^{\mathrm{i}(X-Y)})+\log(1-e^{\mathrm{i}(X+Y)})\right]=-\frac{1}{4}\left[\log (2-2 \cos (X-Y))+\log (2-2 \cos (X+Y))\right]\ .
\end{align}
Therefore, all we need to compute is
\begin{align}
\nonumber \cos\left(\arccos x\pm\arccos y\right) &=\cos(\arccos x)\cos(\arccos y)\mp \sin(\arccos x)\sin(\arccos y)\\
&=xy\mp \sqrt{1-x^2}\sqrt{1-y^2}\ ,
\end{align}
using the standard trigonometric addition formula. After simplifications
\begin{equation}
S=-\frac{1}{4}\log[4 (x-y)^2]=-\frac{1}{2}\left(\log 2+\log |x-y|\right)\ .
\end{equation}
Substituting in the r.h.s. of~\eqref{identitytoprove}, we obtain the claim. 
\end{dimostrazione}

\begin{dimostrazione}{Lemma~\ref{lem:electrostatic_log-gas}}
The case $|x|<1$ is an immediate application of the identity~\eqref{eq:formula:Cheb} and the orthogonality relation~\eqref{eq:Cheby_orth} of the Chebyshev polynomials. 
\par
We consider now the case $x\geq1$ (the case $x\leq-1$ is similar).
When $x\geq1$, $\cosh^{-1}(x)=\log(x+\sqrt{x^2-1})$, so that the identity we wish to prove for $n\geq1$ is
\ben
I_n(x)=-\int_{-1}^1 \log (x-y)\frac{T_n(y)}{\pi\sqrt{1-y^2}}\de y=\frac{1}{n}(x+\sqrt{x^2-1})^{-n}\qquad x\geq 1 \ .\label{identity}
\een
Set $x=(1+t^2)/(2t)$, for $t>1$. Multiplying~\eqref{identity} by $z^n$ and summing for $n\geq 1$
\ben
\sum_{n\geq 1}I_n\left(\frac{1+t^2}{2t}\right)z^n =\sum_{n\geq 1}\frac{z^n}{n t^n}=-\log\left(1-\frac{z}{t}\right)\ .
\label{eq:identity2app}
\een
After elementary manipulations we find ($n\geq 1$)
\ben
I_n\left(\frac{1+t^2}{2t}\right) =-J_n(t)\ ,\quad \text{where}\quad J_n(t)=\int_{-1}^1 \log \left(1+t^2-2ty\right)\frac{T_n(y)}{\pi\sqrt{1-y^2}}\de y\ ,
\een
so that
\ben
\sum_{n\geq 1}I_n\left(\frac{1+t^2}{2t}\right)z^n =-\sum_{n\geq 1}J_n\left(t\right)z^n\ .
\een
Therefore the identity is established if we show that 
\begin{equation}
\sum_{n\geq 1}J_n\left(t\right)z^n=\log\left(1-\frac{z}{t}\right)\qquad |z/t|<1\ .\label{toshow}
\end{equation}
The left hand-side is 
\ben
\sum_{n\geq 1}J_n\left(t\right)z^n=\int_{-1}^1\log \left(1+t^2-2ty\right)\frac{\sum_{n\geq 1}T_n(y)z^n}{\pi\sqrt{1-y^2}}\de y= \int_{-1}^1 \log \left(1+t^2-2ty\right)\frac{\sum_{n\geq 0}T_n(y)z^n-T_0(y)}{\pi\sqrt{1-y^2}}\de y\ .
\een
Using 
\ben
\int_{-1}^1 \log \left(1+t^2-2ty\right)\frac{T_0(y)}{\pi\sqrt{1-y^2}}\de y=2\log t\ ,
\een
and the generating function of the Chebyshev polynomials~\cite[Eq.~18.12.8]{NIST}
\ben
\sum_{n\geq 0}T_n(y)z^n=\frac{1-zy}{1-2zy+z^2}\ ,
\een
we get
\ben
 \sum_{n\geq 1}J_n\left(t\right)z^n=\int_{-1}^1 \log \left(1+t^2-2ty\right)\frac{1}{\pi\sqrt{1-y^2}}\frac{1-zy}{1-2zy+z^2}\de y -2\log t\ .\label{beforederivative}
\een
It is convenient to evaluate the $t$-derivative 
\be
\frac{\partial}{\partial t} \sum_{n\geq 1}J_n\left(t\right)z^n=-\frac{2}{t}+\int_{-1}^1 \frac{2t-2y}{1+t^2-2ty}\frac{1}{\pi\sqrt{1-y^2}}\frac{1-zy}{1-2zy+z^2}\de y \ .
\label{tder}
\ee
The last integral  can be evaluated with a partial fraction expansion and we find (for $t>1$ and $-1<z<1$)

\begin{equation}
\frac{\partial}{\partial t} \sum_{n\geq 1}J_n\left(t\right)z^n=-\frac{1}{t}+\frac{1}{t-z}\ .
\end{equation}
Comparing with the right hand-side of~\eqref{toshow}, we conclude that 
\begin{equation}
\sum_{n\geq 1}J_n\left(t\right)z^n=\log\left(1-\frac{z}{t}\right)+k(z)\ .
\end{equation}
To fix the constant $k(z)$, we may appeal to the large-$t$ behaviour from the right-hand side of~\eqref{tder}
\begin{equation}
\lim_{t\to\infty} \sum_{n\geq 1}J_n\left(t\right)z^n=0\ .
\end{equation}
Therefore, $k(z)=0$ and the proof of~\eqref{eq:identity2app} is complete.
\end{dimostrazione}
\begin{dimostrazione}{Lemma~\ref{lem:shell_thm_Yukawa}}
Using the integral representation of the Bessel function $K_{\frac{d}{2}-1}$ and writing $| x- y|^2=| x|^2+| y|^2-2 (x\cdot y) $, the integral becomes
\begin{equation}
\frac{1}{(4\pi a^2)^{d/2}}\int_0^\infty\frac{\de t}{t^{d/2}}e^{-m^2 t-\frac{1}{4 a^2 t}(| x|^2+r^2)}\int_{| y|=r}\de y\ \exp\left({\frac{1}{2a^2 t} x\cdot y}\right)\ .
\end{equation}
 
 Noting that the problem is rotationally invariant, without loss of generality we can assume that $ x$ has all zero components but one, $ x=(| x|,0,\ldots,0)$. Introducing $d$-dimensional spherical coordinates
\begin{align}
y_1 &=r\cos\phi_1\nonumber\\
y_2 &= r\sin\phi_1\cos\phi_2\nonumber\\
y_2 &=r\sin\phi_1\sin\phi_2\cos\phi_3\nonumber\\
 &\vdots\,\nonumber\\
y_d &= r\sin\phi_1\cdots\sin\phi_{d-2}\sin\phi_{d-1}\quad(0 \leq\phi_1\leq2\pi,\;\;  0\leq\phi_j\leq\pi, \,\text{for $2\leq j\leq d$})\ , 
\end{align} 
the integral over the sphere becomes
 \begin{align}
& \int_{| y|=r}\de y\ \exp\left[{\frac{1}{2a^2 t} x\cdot y}\right]=C\int_0^\pi \de\phi_1 \exp\left[{\frac{| x|r}{2a^2 t}\cos\phi_1}\right](\sin\phi_1)^{d-2}\ ,\\
 &\text{with}\quad C=\int_0^{2\pi}d\phi_{d-1}\prod_{j=2}^{d-2}\int_0^\pi \de\phi_j \left(\sin(\phi_j)\right)^{d-1-j}\ .
 \end{align}
 The $\phi_1$-integral corresponds to the integral representation of the modified Bessel function $I_{\frac{d}{2}-1}$ and, after simplifications, the original integral over the sphere~\eqref{eq:shell_thm_Yukawa} is shown to be equivalent to the one-dimensional integral
 \begin{equation}
 \frac{(| x|r)^{1-d/2}}{2 a^2}\int_0^\infty\frac{\de t}{t}e^{-m^2 t-\frac{1}{4 a^2 t}(| x|^2+r^2)}I_{d/2-1}\left(\frac{| x|r}{2a^2 t}\right)\ .
 \end{equation}
After the change of variables $z=\frac{| x|r}{2a^2 t}$, we can use the following cute identity (a version of~\cite[Eq. 6.635.3]{Gradshteyn})
 \begin{equation}
\int_0^{\infty } \frac{\de z}{z}I_{\nu}(z) \exp \left(-\frac{\eta  \xi }{2 z}-\frac{z}{2}  \left(\frac{\eta }{\xi }+\frac{\xi }{\eta }\right)\right)=
 \begin{cases}
 2 K_{\nu}(\eta ) I_{\nu}(\xi ) &\mbox{ for }\eta\geq\xi\\
 2 I_{\nu}(\eta ) K_{\nu}(\xi )&\mbox{ for }\eta<\xi\\
 \end{cases}\ ,
 \end{equation}
 valid for ${\nu}\geq 0$ and $\xi,\eta>0$. This concludes the proof of the Lemma.
 \end{dimostrazione}

\end{document}